\documentclass{article}

\usepackage{amsmath, amssymb, amsthm}
\usepackage{braket}
\usepackage{hyperref}
\usepackage{cleveref}
\usepackage{enumitem}
\usepackage{physics}
\usepackage{quantikz}
\usepackage[title]{appendix}
\usepackage[sort&compress,numbers]{natbib}

\DeclareMathOperator{\Span}{span}
\DeclareMathOperator{\sr}{sr}
\DeclareMathOperator{\diag}{diag}

\newtheorem{theorem}{Theorem}
\newtheorem{example}{Example}
\newtheorem{cor}{Corollary}

\newtheorem{lemma}{Lemma}

\makeatletter
\let\c@example\c@theorem
\let\c@cor\c@theorem
\let\c@definition\c@theorem
\let\c@lemma\c@theorem
\makeatother

\newcommand{\clickableitem}[3]{%
  \refstepcounter{angle_counter}%
  \edef\arabiclabel{\arabic{angle_counter}}%
  \item[\hyperlink{#1}{(\arabiclabel)}]\hypertarget{#2}{}\label{#2} #3%
  \expandafter\xdef\csname ref@#2\endcsname{\arabiclabel}%
}

\newcommand{\caseref}[2]{%
  Case~\ref{#1}.\hyperlink{#2}{(\csname ref@#2\endcsname)}%
}

\title{Characterization of three-qubit controlled unitary gates of Schmidt rank three}

\author{Xiutao Zhang \\
        Department of Mechanical and Process Engineering \\
        RPTU University Kaiserslautern-Landau \\
        \texttt{zxtstudy@outlook.com}}

\begin{document}

\maketitle
\begin{abstract}
We characterize three-qubit controlled unitary gates of Schmidt rank three, establishing necessary and sufficient conditions for such gates to have Schmidt rank three. We construct explicit examples and analyze their entanglement capabilities, showing that gates with Schmidt rank three can generate output states ranging from fully separable to maximal GHZ-class states. Within this classification, we present a parameterized gate family that produces an output W-class state whose bipartite tangles for subsystems AB and AC are modulated as simple trigonometric functions of a single phase parameter. The examples also include gates implementable with only three CNOT gates, showing that this lower bound is achievable. Decompositions that achieve the minimum possible CNOT count for several other cases are provided as well. Our results bridge Schmidt rank classification, entanglement structure, and resource-efficient circuit synthesis.
\end{abstract}
\textbf{Keywords:} Schmidt rank, three-qubit unitary gates, entanglement, controlled unitary gates, quantum circuit

\section{Introduction}

Multipartite unitary operations are fundamental to quantum information processing. Their complexity can be characterized by the Schmidt rank, which quantifies the minimum number of product terms in their decomposition. Previous work \cite{Chen_2014} established that any Schmidt rank three bipartite unitary is locally equivalent to a controlled unitary, and any Schmidt rank three multipartite unitary is either controlled by one system or collectively controlled by two, extending the well understood rank two case \cite{KAK}. The relevance of Schmidt rank three operators extends further: the existence of four mutually unbiased bases (MUBs) in dimension six---a long standing open problem---has been linked to Schmidt rank three complex Hadamard matrices \cite{mub3_2020,mcnulty2024mutually}, where the controlled unitary structure plays a crucial role.

Recent experimental and theoretical advances have demonstrated the feasibility of implementing three-qubit controlled unitaries across diverse physical platforms. Superconducting qubit processors have achieved direct high-fidelity CCZ gates with 97.94\% state fidelity using tunable couplers \cite{2025Direct}. Spin qubit systems enable single-step high-fidelity three-qubit entangling gates via anisotropic chiral interactions \cite{Nguyen2025}, while photonic systems have realized few-qubit gates using polarization and spatial parity encoding \cite{Kagalwala_2017}. Driven spin chains provide analytical constructions of three-qubit Barenco gates with high robustness \cite{Vieira_2026b}. Complementing these efforts, Batle et al. \cite{Batle_2024} experimentally certified the Schmidt rank of bipartite states on IBM Quantum processors using a device-independent null witness test. Although their work focuses on states rather than unitaries, it experimentally validates the same core notion of Schmidt rank that underpins our classification of three-qubit unitaries, suggesting a pathway toward future experimental certification of gates with fixed Schmidt rank.

In this paper, we focus on controlled three-qubit unitaries of Schmidt rank three and derive necessary and sufficient conditions for such gates to have Schmidt rank three. We analyze the two distinct structural forms \eqref{eq:U1_simplified} and \eqref{eq:U2_simplified}, and provide explicit criteria in \Cref{thm:U1} and \Cref{thm:U2}, respectively. While previous work \cite{Shen_2022} provided criteria for determining whether a diagonal three-qubit gate has Schmidt rank three, our results extend beyond the diagonal case to a broad family of non-diagonal controlled unitaries. 
We further generalize this to $n$ qubits in \Cref{thm:nbit}, giving a sufficient condition for Schmidt rank three. 

Beyond classification, we analyze the entanglement capability of these gates. Despite sharing Schmidt rank three, they generate output states ranging from fully separable to maximal GHZ-class states, revealing that operator-level Schmidt rank does not uniquely dictate state-level multipartite entanglement. Within this family, the parameterized gate $U_5(\psi,\delta)$ acts as a bipartite entanglement router: the phase parameter $\delta$ continuously redistributes entanglement between $\tau_{AB}$ and $\tau_{AC}$ while preserving global coherence.

Finally, we provide explicit circuit decompositions, showing that several Schmidt-rank-three gates, including \eqref{U3} (depicted in \Cref{fig:U3}), require only three CNOT gates, resolving a question left open in \cite{Song_2021}. For several other gates, including \eqref{U5}, we provide optimal decompositions that achieve the minimum possible CNOT cost. Given that CNOT gates are a primary source of errors and decoherence in NISQ devices, minimizing their count is crucial for enhancing circuit fidelity \cite{cnot_2023}. This work thus bridges abstract structural theory with explicit circuit synthesis, offering resource-efficient constructions for near-term quantum processors.

The remainder of this paper is organized as follows. \Cref{sec:preliminaries} introduces preliminary concepts. \Cref{sec:sr3chara} establishes conditions for Schmidt rank three, provides examples, and generalizes to $n$ qubits. \Cref{sec:entanglement} analyzes entanglement capability. \Cref{sec:sr3decomp} presents quantum circuit decompositions. We conclude in \Cref{sec:conclusion}.

\section{Preliminaries}\label{sec:preliminaries}

\newcommand{\tens}{\otimes}              
\newcommand{\Id}{I_2}

\newcommand{\Fzero}[2]{
  \begin{pmatrix} \cos#1 \\ \sin#1 \end{pmatrix}
  \begin{pmatrix} \cos#2 & \sin#2 \end{pmatrix}
}

\newcommand{\Fone}[2]{
  \begin{pmatrix} -\sin#1 \\ \cos#1 \end{pmatrix}
  \begin{pmatrix} -\sin#2 & \cos#2 \end{pmatrix}
}

We first introduce the notations used in this paper. We refer to $\mathbb{M}_n$ as the set of $n \times n$ complex matrices.  We denote the identity matrix and the three Pauli matrices by $I_2$, $\sigma_x$, $\sigma_y$, and $\sigma_z$, respectively.
We further denote $S_0$, $S_1$, $S_2$, $S_3$ as $2 \times 2$ matrices
\noindent
\[
  S_0 = \ket{0}\bra{0},\qquad
  S_1 = \ket{0}\bra{1},\qquad
  S_2 = \ket{1}\bra{0},\qquad
  S_3 = \ket{1}\bra{1}
\]
, respectively.
We define the Schmidt rank of an $n$-partite matrix $U$, acting on the Hilbert space $\mathcal{H}_1 \otimes \cdots \otimes \mathcal{H}_n \cong \mathbb{C}^{d_1} \otimes \cdots \otimes \mathbb{C}^{d_n}$, as the minimum integer $r$ such that $U$ can be expressed as
$U = \sum_{j=1}^r A_{j,1} \otimes \cdots \otimes A_{j,n}$, where each $A_{j,i}$ is a $d_i \times d_i$ matrix. We denote this $n$-partite Schmidt rank by $\sr(U)$. For any bipartition $S \cup \bar{S} = \{1,\dots,n\}$ with $S \cap \bar{S} \neq \varnothing$, let $\sr_{S \mid \bar{S}}(U)$ be the bipartite Schmidt rank. Then $\sr(U) \ge \sr_{S \mid \bar{S}}(U)$, because any $n$-partite Schmidt decomposition of $U$ yields a bipartite decomposition of $U$ across $S\mid\bar{S}$ with the same number of terms. Hence, the bipartite Schmidt rank provides a lower bound for the $n$-partite Schmidt rank.
Furthermore, we denote by $\mathbb{U}(n)$ the group of $n \times n$ unitary matrices.

We additionally denote $F_0$, $F_1$ as $2 \times 2$ matrices
\[
F_0 = \Fzero{\gamma}{\delta}
    = \begin{pmatrix}
        \cos\gamma\cos\delta & \cos\gamma\sin\delta \\
        \sin\gamma\cos\delta & \sin\gamma\sin\delta
      \end{pmatrix},
\]
\[
F_1 = \Fone{\gamma}{\delta}
    = \begin{pmatrix}
        \sin\gamma\sin\delta & -\sin\gamma\cos\delta \\
        -\cos\gamma\sin\delta & \cos\gamma\cos\delta
      \end{pmatrix}
\], where $\gamma,\delta\in\mathbb{R}$.

To systematically derive the Schmidt rank, we review a fact from \cite{landsberg2011tensors}.

\begin{lemma}\label{lem:pr}
Let $ U = \sum_{j=1}^{r} A_j \otimes R_j $ be a tripartite matrix on $ \mathcal{H}_A \otimes \mathcal{H}_B \otimes \mathcal{H}_C $, where the operators $ A_j $ on $ \mathcal{H}_A$ are linearly independent. Then the Schmidt rank $ \sr(U) $ is the minimal number of product matrices whose span contains $ \Span\{R_1,\dots,R_r\} $.
\end{lemma}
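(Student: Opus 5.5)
We prove the two inequalities separately. Write $V=\Span\{R_1,\dots,R_r\}\subseteq \mathbb{M}_{d_B}\otimes\mathbb{M}_{d_C}$, and let $m$ be the minimal number of product matrices $B_k\otimes C_k$ whose span contains $V$. We show that $\sr(U)\le m$ and that $\sr(U)\ge m$.

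\emph{Upper bound.} Suppose $V\subseteq\Span\{B_1\otimes C_1,\dots,B_m\otimes C_m\}$. Then each $R_j$ can be written as $R_j=\sum_{k=1}^m c_{jk}\,B_k\otimes C_k$ with $c_{jk}\in\mathbb{C}$. Substituting into $U$ and regrouping the sum over $j$ gives
\[
U=\sum_{k=1}^m\Big(\sum_{j=1}^r c_{jk}A_j\Big)\otimes B_k\otimes C_k .
\]
This is a tripartite decomposition of $U$ with $m$ terms, so $\sr(U)\le m$.

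\emph{Lower bound.} Suppose $U=\sum_{k=1}^{s}A'_k\otimes B_k\otimes C_k$ with $s=\sr(U)$. Since the $A_j$ are linearly independent, there are linear functionals $\phi_1,\dots,\phi_r$ on $\mathbb{M}_{d_A}$ with $\phi_i(A_j)=\delta_{ij}$. Concretely, we can extend $\{A_j\}$ to a basis and take the dual basis, or use trace pairings against a dual family. Apply $\phi_i\otimes\mathrm{id}\otimes\mathrm{id}$, which is well defined on the tensor product, to both expressions for $U$. The first expression yields $R_i$. The second yields $\sum_k\phi_i(A'_k)\,B_k\otimes C_k$. Hence every $R_i$, and therefore all of $V$, lies in $\Span\{B_k\otimes C_k\}_{k=1}^s$, which gives $m\le s$.

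The argument is routine multilinear algebra. The only step needing care is the second one, namely justifying that the partial contraction $\phi_i\otimes\mathrm{id}\otimes\mathrm{id}$ is well defined and linear on $\mathbb{M}_{d_A}\otimes\mathbb{M}_{d_B}\otimes\mathbb{M}_{d_C}$, so that it can be applied termwise to both decompositions. This follows from the universal property of the tensor product, or directly by viewing $U$ as a block matrix indexed by $\mathcal{H}_A$ and taking the appropriate linear combination of its blocks. Combining the two inequalities gives $\sr(U)=m$.
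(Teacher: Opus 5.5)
Your proof is correct and complete. Regrouping gives $\sr(U)\le m$. Contracting with functionals $\phi_i$ dual to the linearly independent $A_j$ shows that each $R_i$ lies in the span of the $B_k\otimes C_k$ from any optimal decomposition, so $m\le\sr(U)$. Independence of the $A_j$ is used exactly where it is needed.

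The paper gives no proof of this lemma; it quotes it from Landsberg. There it is the standard fact that the rank of $T\in A\otimes B\otimes C$ equals the minimal number of rank-one elements of $B\otimes C$ whose span contains $T(A^*)$. Your argument is the usual proof of that fact. The independence of the $A_j$ is what makes $(\phi(A_1),\dots,\phi(A_r))$ range over all of $\mathbb{C}^r$, which identifies $T(A^*)$ with $\Span\{R_1,\dots,R_r\}$.

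Your dual-functional step is also the device the paper itself uses later, namely the functionals $\varphi_0,\varphi_1$ in the necessity part of the proof of Theorem~\ref{thm:U1}.
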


\begin{lemma}\label{lem:srlb}
Let 
\[
U = \sum_{i=0}^{m-1} X_i \otimes \left( \sum_{j=0}^{n_i-1} Y_{ij} \otimes Z_{ij} \right)
\]
be a tripartite operator on $\mathcal{H}_A \otimes \mathcal{H}_B \otimes \mathcal{H}_C$, where each $X_i$, $Y_{ij}$, $Z_{ij}$ acts on $\mathcal{H}_A$, $\mathcal{H}_B$, $\mathcal{H}_C$, respectively. If
\begin{enumerate}[label=(\alph*)]
    \item for each $i$, the sets $\{Y_{ij}\}_{0 \le j \le n_i-1}$ and $\{Z_{ij}\}_{0 \le j \le n_i-1}$ are linearly independent, and
    \item the operators $X_0,\dots,X_{m-1}$ are linearly independent,
\end{enumerate}
then there exist operators $C_0,\dots,C_{\sr(U)-1}$ on $\mathcal{H}_C$ such that
\[
\Span\{ Z_{ij} : 0 \le i < m,\; 0 \le j < n_i \} = \Span\{ C_0,\dots,C_{\sr(U)-1} \}.
\]
Consequently,
\[
\dim \Span\{ Z_{ij} : 0 \le i < m,\; 0 \le j < n_i \} \le \sr(U)
\]
and, analogously,
\[
\dim \Span\{ Y_{ij} : 0 \le i < m,\; 0 \le j < n_i \} \le \sr(U).
\]
\end{lemma}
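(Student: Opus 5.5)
The plan is to compare the given decomposition with an optimal one using linear functionals (partial contractions) on the first two tensor factors. This shows that every $Z_{ij}$ lies in the span of the $\mathcal{H}_C$-factors of an optimal decomposition. The stated equality then follows by re-choosing the operators $C_k$ inside $\Span\{Z_{ij}\}$, and the dimension bounds are immediate.

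\textbf{Step 1 (an optimal decomposition).} Let $r=\sr(U)$ and fix a decomposition
\[
U=\sum_{k=0}^{r-1} A_k\otimes B_k\otimes C'_k
\]
with $A_k,B_k,C'_k$ acting on $\mathcal{H}_A,\mathcal{H}_B,\mathcal{H}_C$ respectively.

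\textbf{Step 2 (isolating the $i$-th block).} By hypothesis (b), the $X_i$ are linearly independent. Hence there are linear functionals $\phi_0,\dots,\phi_{m-1}$ on the operator space of $\mathcal{H}_A$ with $\phi_i(X_{i'})=\delta_{ii'}$. Applying $\phi_i\otimes\mathrm{id}\otimes\mathrm{id}$ to both expressions of $U$ gives
\[
\sum_{j} Y_{ij}\otimes Z_{ij}=\sum_{k}\phi_i(A_k)\,B_k\otimes C'_k .
\]

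\textbf{Step 3 (isolating $Z_{ij}$).} By hypothesis (a), for fixed $i$ the $Y_{ij}$ are linearly independent. So there are functionals $\psi_{i0},\dots,\psi_{i,n_i-1}$ on the operators of $\mathcal{H}_B$ with $\psi_{ij}(Y_{ij'})=\delta_{jj'}$. Applying $\psi_{ij}\otimes\mathrm{id}$ to the identity from Step 2 gives
\[
Z_{ij}=\sum_k \phi_i(A_k)\psi_{ij}(B_k)\,C'_k .
\]
Hence $W:=\Span\{Z_{ij}\}\subseteq\Span\{C'_0,\dots,C'_{r-1}\}$, and so $d:=\dim W\le r$.

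\textbf{Step 4 (obtaining the equality).} Since $d\le r$, pick a basis $C_0,\dots,C_{d-1}$ of $W$ and set $C_d=\dots=C_{r-1}=0$; equivalently, repeat $C_0$ if nonzero operators are preferred. Then $\Span\{C_0,\dots,C_{\sr(U)-1}\}=W$ exactly, which is the claimed equality. The bound $\dim W\le\sr(U)$ follows at once.

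\textbf{Step 5 (the $Y$ bound).} Exchange the roles of $\mathcal{H}_B$ and $\mathcal{H}_C$ in Step 3, using the functionals dual to the independent $Z_{ij}$ from (a). This shows $Y_{ij}=\sum_k\phi_i(A_k)\chi_{ij}(C'_k)B_k$. Therefore $\Span\{Y_{ij}\}\subseteq\Span\{B_k\}$, giving $\dim\Span\{Y_{ij}\}\le\sr(U)$.

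The only delicate point is Step 3. Independence of the $Y_{ij}$ is needed only within a fixed block $i$, and this suffices because Step 2 has already separated the blocks using (b). Nothing else beyond linear algebra is required.
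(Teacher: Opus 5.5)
Read literally, your proposal proves the lemma. Steps 1--3 and 5 are essentially the paper's inclusion argument. Where the paper invokes Lemma~\ref{lem:pr} to write $\sum_j Y_{ij}\otimes Z_{ij}=\sum_k a_{ik}B_k\otimes C_k$ and then asserts $Z_{ij}\in\Span\{C_k\}$, $Y_{ij}\in\Span\{B_k\}$, you get the same identity directly from an optimal decomposition, with $a_{ik}=\phi_i(A_k)$. You also make the extraction explicit with the dual functionals $\psi_{ij},\chi_{ij}$. This is clean and self-contained, and it fully establishes both dimension bounds.

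The weak point is Step 4. Padding a basis of $W$ with zeros makes the equality vacuous: once $\dim W\le\sr(U)$ is known, the existence of such $C_k$ says nothing. The paper proves the substantive version: the $\mathcal{H}_C$-factors of a minimal decomposition themselves span $W$. That is the form used later in the appendix (``by Lemma~\ref{lem:srlb}, $\Span\{C_0,C_1\}=\Span\{Z_0,Z_1\}$; therefore, $C_0,C_1$ are linearly independent''), where $C_0,C_1$ come from a given two-term decomposition. Your Step 4 conclusion cannot be cited there.

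The fix is to keep $C_k:=C'_k$ from Step 1 instead of choosing new operators. Your Step 3 already gives $W\subseteq\Span\{C'_k\}$. In the appendix applications $\dim W=\sr(U)$, so this inclusion alone forces equality by counting dimensions.

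For the reverse inclusion in general, use minimality as the paper does. The operators $M_k:=A_k\otimes B_k$ are linearly independent: a relation, after relabeling $M_{r-1}=\sum_{k<r-1}c_kM_k$, would give $U=\sum_{k<r-1}A_k\otimes B_k\otimes(C'_k+c_kC'_{r-1})$ with only $r-1$ terms. Choosing functionals $\theta_k$ with $\theta_k(M_{k'})=\delta_{kk'}$ then yields $C'_k=(\theta_k\otimes\mathrm{id})(U)=\sum_{i,j}\theta_k(X_i\otimes Y_{ij})\,Z_{ij}\in W$. Hence $\Span\{C'_0,\dots,C'_{r-1}\}=W$ for every minimal decomposition.
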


\begin{proof}
Since $ \{X_i\} $ are linearly independent, by Lemma~\ref{lem:pr} there exist product operators $ B_k \otimes C_k $ for $ k = 0,\dots,\sr(U)-1 $ such that for each $ i $
\[
\sum_{j=0}^{n_i-1} Y_{ij} \otimes Z_{ij} = \sum_{k=0}^{\sr(U)-1} a_{ik} B_k \otimes C_k
\]
where each $a_{ik}\in\mathbb{C}$, and each $B_k$ and $C_k$ acts on $\mathcal{H}_B$ and $\mathcal{H}_C$, respectively. By linear independence of $\{Y_{ij}\}$ and $\{Z_{ij}\}$ for each i, each $ Y_{ij} $ belongs to $ \Span\{ B_k \} $ and each $ Z_{ij} $ belongs to $ \Span\{ C_k \} $ for all $ i,j $. Consequently,
\[
\dim \Span\{ Y_{ij} \} \le \sr(U),\qquad
\dim \Span\{ Z_{ij} \} \le \sr(U).
\]

Moreover, since each $ Z_{ij} \in \Span\{ C_k \} $, we have $ \Span\{ Z_{ij} \} \subseteq \Span\{ C_k \} $. For the reverse inclusion, note that
\[
U = \sum_{i=0}^{m-1} X_i \otimes \left( \sum_{k=0}^{\sr(U)-1} a_{ik} B_k \otimes C_k \right) = \sum_{k=0}^{\sr(U)-1} \left( \sum_{i=0}^{m-1} a_{ik} X_i\right) \otimes B_k \otimes C_k
\]
is a decomposition of $ U $ into $ \sr(U) $ terms. Therefore, the set
\[
\left\{ \sum_{i=0}^{m-1} a_{ik} X_i \otimes B_k : k = 0,\dots,\sr(U)-1 \right\}
\]
is linearly independent. Hence, each $ C_k $ lies in $ \Span\{ Z_{ij} \} $, giving $ \Span\{ C_k \} \subseteq \Span\{ Z_{ij} \} $. Therefore, equality holds.
\end{proof}

\Cref{lem:srlb} is central to establishing lower bounds on the Schmidt rank, as illustrated in Case~\ref{case:dim3} of \Cref{thm:U2}.

\section{Characterization of three-qubit unitary gates with Schmidt rank three}
\label{sec:sr3chara}
The goal of this section is to establish conditions under which the three-qubit unitary operator of the following form has Schmidt rank three:
\begin{equation}
U = S_0 \otimes P + S_3 \otimes Q, \label{eq:U=SP+SQ}
\end{equation}
where $Q = \ket{v_0}\bra{w_0} \otimes V_0 + \ket{v_1}\bra{w_1} \otimes V_1$, the sets $\{\ket{v_0},\ket{v_1}\}$, and $\{\ket{w_0},\ket{w_1}\}$ are orthonormal bases of $\mathbb{C}^2$, while $V_0,V_1 \in \mathbb{U}(2)$ are linearly independent and $P \in \mathbb{U}(4)$. Throughout this section, we refer to $P$ and $Q$ as the target operators corresponding to $S_0$ and $S_3$, respectively.

Since the Schmidt rank is invariant under local unitary operations, we may without loss of generality simplify these expressions \cite{KAK,Nielsen_2003,Cohen_2013}. Specifically, we can transform $U$ into $U_1$ when $\sr(P)=1$, and into $U_2$ when $\sr(P)=2$:
\begin{align}
U_1 &= S_0 \otimes I_2 \otimes I_2 + S_3 \otimes \bigl( F_0 \otimes W_0 + F_1 \otimes W_1 \bigr), \label{eq:U1_simplified} \\
U_2 &= S_0 \otimes \bigl(S_0 \otimes Z_0 + S_3 \otimes Z_1 \bigr) + S_3 \otimes \bigl( F_0 \otimes Z_2 + F_1 \otimes Z_3 \bigr), \label{eq:U2_simplified}
\end{align}
where $W_0,W_1 \in \mathbb{U}(2)$, $Z_0 = I_2$, $Z_1 = \begin{pmatrix} 1 & 0 \\ 0 & e^{i\varphi} \end{pmatrix}$ with $\varphi \notin 2\pi\mathbb{Z}$, $Z_2, Z_3 \in \mathbb{U}(2)$, and the sets $\{F_0,F_1\}$ (defined in \Cref{sec:preliminaries}), $\{W_0,W_1\}$, $\{Z_0,Z_1\}$, $\{Z_2,Z_3\}$ are each linearly independent.
We note that for $U_1$, the set $\{I_2,W_0,W_1\}$ contains the independent subset $\{W_0,W_1\}$; similarly for $U_2$, the sets $\{Z_0,Z_1\}$ and $\{Z_2,Z_3\}$ are each independent. Consequently, by Lemma~\ref{lem:srlb} we have $\sr(U_1) \geq 2$ and $\sr(U_2) \geq 2$. This fact will be used throughout without further explanation. Moreover, if either $P$ or $Q$ has Schmidt rank four, then Lemma~\ref{lem:srlb} implies $\sr(U) \geq 4$. Since our focus is on gates of Schmidt rank three, such cases are not of interest here.

We present the following lemma that will be used in the proof of \Cref{thm:U1}.
\begin{lemma}\label{lem:angle1}
For $F_0$ and $F_1$ defined in \Cref{sec:preliminaries} the following statements are equivalent:
\begin{enumerate}[label=(\roman*)]
\item\label{cond:span} The identity matrix $I_2$ belongs to the span of $\{F_0,F_1\}$.
\item\label{cond:pd_angle} $\gamma - \delta = n\pi$ for some integer $n$.
\item\label{cond:sum} $F_0 + F_1 = (-1)^n I_2$ for some integer $n$.
\end{enumerate}
\end{lemma}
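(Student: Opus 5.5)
We will prove the cycle \ref{cond:sum} $\Rightarrow$ \ref{cond:span} $\Rightarrow$ \ref{cond:pd_angle} $\Rightarrow$ \ref{cond:sum}. The implication \ref{cond:sum} $\Rightarrow$ \ref{cond:span} is immediate: if $F_0+F_1=(-1)^nI_2$, then $I_2=(-1)^n(F_0+F_1)$ lies in $\Span\{F_0,F_1\}$.

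For \ref{cond:pd_angle} $\Rightarrow$ \ref{cond:sum}, add the explicit matrices:
\[
F_0+F_1=\begin{pmatrix} \cos\gamma\cos\delta+\sin\gamma\sin\delta & \cos\gamma\sin\delta-\sin\gamma\cos\delta \\ \sin\gamma\cos\delta-\cos\gamma\sin\delta & \sin\gamma\sin\delta+\cos\gamma\cos\delta\end{pmatrix}
=\begin{pmatrix}\cos(\gamma-\delta) & -\sin(\gamma-\delta)\\ \sin(\gamma-\delta) & \cos(\gamma-\delta)\end{pmatrix}.
\]
This is the rotation by $\gamma-\delta$. If $\gamma-\delta=n\pi$, it equals $\cos(n\pi)I_2=(-1)^nI_2$.

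The main step is \ref{cond:span} $\Rightarrow$ \ref{cond:pd_angle}. Suppose $I_2=aF_0+bF_1$ with $a,b\in\mathbb{C}$. Write $F_0=\ket{u}\bra{x}$ and $F_1=\ket{u^\perp}\bra{x^\perp}$, where $\ket{u}=(\cos\gamma,\sin\gamma)^T$, $\ket{u^\perp}=(-\sin\gamma,\cos\gamma)^T$, and $\ket{x},\ket{x^\perp}$ are defined the same way with $\delta$. These are real orthonormal bases.

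Apply both sides to $\ket{x}$. The left side gives $\ket{x}$, and the right side gives $a\ket{u}$, so $\ket{x}=a\ket{u}$. Since both are real unit vectors, $a=\pm1$. Comparing components then gives $\cos\delta=a\cos\gamma$ and $\sin\delta=a\sin\gamma$. Hence $\delta\equiv\gamma$ or $\delta\equiv\gamma+\pi \pmod{2\pi}$, that is, $\gamma-\delta=n\pi$.

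A cross-check uses traces and determinants, which avoids vectors. We have $\operatorname{tr}F_0=\operatorname{tr}F_1=\cos(\gamma-\delta)$ and $F_0F_1=\cos(\gamma-\delta)\,\ket{u}\bra{x^\perp}$-type terms. Taking the trace of $I_2=aF_0+bF_1$ already forces $\cos(\gamma-\delta)\neq0$, but the vector argument above is cleaner and is what we will use.

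The only delicate point is ensuring the coefficient $a$ is forced to be real, specifically $\pm1$. This follows because $\ket{x}$ and $\ket{u}$ are real unit vectors: from $\ket{x}=a\ket{u}$ we get $|a|=1$, and since $a=\langle u|x\rangle$ is real, $a=\pm1$.
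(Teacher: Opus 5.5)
Your proof is correct. The easy implications, (iii)$\Rightarrow$(i) and (ii)$\Rightarrow$(iii) via the rotation-matrix form of $F_0+F_1$, coincide with the paper's; the difference lies in the key step (i)$\Rightarrow$(ii).

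The paper expands $aF_0+bF_1=I_2$ into four scalar equations. It notes that the two off-diagonal equations form a homogeneous system in $(a,b)$ with determinant $\sin(\gamma-\delta)\sin(\gamma+\delta)$. It then rules out the spurious branch $\gamma+\delta\in\pi\mathbb{Z}$, $\gamma-\delta\notin\pi\mathbb{Z}$ by combining the diagonal equations into $(a-b)\cos(\gamma+\delta)=0$ and deriving a contradiction.

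You instead use the rank-one structure $F_0=\ket{u}\bra{x}$, $F_1=\ket{u^\perp}\bra{x^\perp}$. Evaluating the identity on $\ket{x}$ kills the $F_1$ term and gives $\ket{x}=a\ket{u}$, so the two real unit vectors are parallel. That is exactly $\gamma-\delta\in\pi\mathbb{Z}$; you do not even need $a=\pm1$, since parallelism alone gives $\sin(\gamma-\delta)=0$.

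Your route is shorter, needs no case split, and works verbatim for complex $a,b$. The paper phrases its system for real coefficients, which is harmless because all matrices involved are real. What the paper's coordinate computation buys is reuse: the same off-diagonal determinant $\sin(\gamma-\delta)\sin(\gamma+\delta)$ drives Lemma~\ref{lem:angle2}, where the $\gamma+\delta$ branch is genuinely needed.

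One small correction to your unused cross-check: $F_0F_1=\langle x|u^\perp\rangle\,\ket{u}\bra{x^\perp}=-\sin(\gamma-\delta)\,\ket{u}\bra{x^\perp}$, not a multiple of $\cos(\gamma-\delta)$. Since your argument does not rely on it, simply drop that remark.
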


\begin{proof}
We prove the equivalence in three parts.

\paragraph{\ref{cond:pd_angle}$\Leftrightarrow$\ref{cond:sum}.}
A direct computation gives
\[
F_0+F_1 = 
\begin{pmatrix}
\cos(\gamma-\delta) & -\sin(\gamma-\delta) \\
\sin(\gamma-\delta) &  \cos(\gamma-\delta)
\end{pmatrix}.
\]
This matrix equals $(-1)^n I_2$ iff $\cos(\gamma-\delta)=(-1)^n$ and $\sin(\gamma-\delta)=0$, i.e., $\gamma-\delta=n\pi$. Hence \ref{cond:pd_angle} and \ref{cond:sum} are equivalent.

\paragraph{\ref{cond:span}$\Rightarrow$\ref{cond:pd_angle}.}
Assume there exist real numbers $a$ and $b$, not both zero, such that $a F_0 + b F_1 = I_2$. Writing the four entries yields the system
\begin{align}
a\cos\gamma\cos\delta + b\sin\gamma\sin\delta &= 1, \label{eq:first}\\
a\cos\gamma\sin\delta - b\sin\gamma\cos\delta &= 0, \label{eq:second}\\
a\sin\gamma\cos\delta - b\cos\gamma\sin\delta &= 0, \label{eq:third}\\
a\sin\gamma\sin\delta + b\cos\gamma\cos\delta &= 1. \label{eq:fourth}
\end{align}
Equations \eqref{eq:second} and \eqref{eq:third} form a homogeneous linear system in $a,b$. For a non‑trivial solution its determinant must vanish:
\[
\det = \sin(\gamma-\delta)\sin(\gamma+\delta)=0.
\]
Thus either $\gamma-\delta=n\pi$ or $\gamma+\delta=m\pi$ ($n,m\in\mathbb{Z}$).  
Now we subtract \eqref{eq:fourth} from \eqref{eq:first} to obtain
\begin{equation}
(a-b)\cos(\gamma+\delta)=0, \label{eq:diff}
\end{equation}
and we add \eqref{eq:first} and \eqref{eq:fourth} to obtain
\begin{equation}
(a+b)\cos(\gamma-\delta)=2. \label{eq:sum}
\end{equation}

\begin{itemize}
\item If $\gamma-\delta=n\pi$, then $\cos(\gamma-\delta)=(-1)^n$ and from \eqref{eq:sum} we have $(a+b)(-1)^n=2$, which is solvable; moreover $\sin(\gamma-\delta)=0$ makes \eqref{eq:second} and \eqref{eq:third} automatically satisfied. So this case is possible.
\item If $\gamma+\delta=m\pi$ but $\gamma-\delta\neq n\pi$, then $\cos(\gamma+\delta)=(-1)^m$ and from \eqref{eq:diff} we obtain $(a-b)(-1)^m=0$, forcing $a=b$. Substituting $a=b$ into \eqref{eq:second} gives $a\sin(\delta-\gamma)=0$, i.e., $a\sin(\gamma-\delta)=0$. Since $a\neq0$ (otherwise $b=0$), we obtain $\sin(\gamma-\delta)=0$, contradicting $\gamma-\delta\neq n\pi$. Hence this case cannot occur.
\end{itemize}
Therefore a necessary condition for \ref{cond:span} is $\gamma-\delta=n\pi$.

\paragraph{\ref{cond:pd_angle}$\Rightarrow$\ref{cond:span}.}
Since \ref{cond:pd_angle} and \ref{cond:sum} are equivalent, condition \ref{cond:pd_angle} implies $F_0+F_1 = (-1)^n I_2$. Then taking the linear combination with coefficients $(-1)^n$ for both matrices yields
\[
(-1)^n F_0 + (-1)^n F_1 = (-1)^n (F_0+F_1) = (-1)^n\bigl((-1)^n I_2\bigr) = I_2,
\]
so $I_2$ lies in the span of $\{F_0,F_1\}$. Hence \ref{cond:pd_angle} implies \ref{cond:span}.

Thus the three conditions are equivalent. \qedhere
\end{proof}

We present the following theorem characterizing three-qubit unitary gates of the form \eqref{eq:U1_simplified} that have Schmidt rank three.

\begin{theorem} \label{thm:U1}
For the operator $U_1$ defined in \eqref{eq:U1_simplified}, we have $\sr(U_1)=3$ if and only if $\sr(U_1)\neq 2$. Moreover, $\sr(U_1)=2$ if and only if the following three conditions hold:
\begin{enumerate}[label=(\arabic*)]
    \item\label{cond:dim} $\dim\Span\{F_0,F_1,I_2\}=2$;
    \item\label{cond:comm} $W_0$ and $W_1$ commute;
    \item\label{cond:Z_diff} $W_0 - W_1$ is not a scalar multiple of the identity.
\end{enumerate}
\end{theorem}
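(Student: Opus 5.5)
The first claim comes from a sandwich argument. Expanding \eqref{eq:U1_simplified} gives the explicit three-term decomposition
\[
U_1=S_0\otimes I_2\otimes I_2+S_3\otimes F_0\otimes W_0+S_3\otimes F_1\otimes W_1 ,
\]
so $\sr(U_1)\le 3$. The remarks after \eqref{eq:U2_simplified} give $\sr(U_1)\ge 2$. Hence $\sr(U_1)\in\{2,3\}$, and $\sr(U_1)=3$ if and only if $\sr(U_1)\neq 2$. The real work is the characterization of $\sr(U_1)=2$.

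\emph{Necessity.} Suppose $\sr(U_1)=2$. Since $S_0,S_3$ are linearly independent, Lemma~\ref{lem:srlb} applies with $\{Y_{0j}\}=\{I_2\}$, $\{Z_{0j}\}=\{I_2\}$ and $\{Y_{1j}\}=\{F_0,F_1\}$, $\{Z_{1j}\}=\{W_0,W_1\}$. It gives
\[
\dim\Span\{I_2,F_0,F_1\}\le 2,\qquad \dim\Span\{I_2,W_0,W_1\}\le 2 .
\]
Because $F_0,F_1$ are independent, the first bound is condition~\ref{cond:dim}. By Lemma~\ref{lem:angle1} we then get $F_0+F_1=\epsilon I_2$ with $\epsilon=\pm1$.

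Similarly, $I_2=aW_0+bW_1$ for some scalars $a,b$. Then $W_0$ and $W_1$ commute: if $b\neq0$, $W_1$ is a polynomial in $W_0$, and the case $b=0$ is symmetric. This is condition~\ref{cond:comm}.

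Conversely, if $W_0,W_1$ are commuting independent unitaries, they are simultaneously diagonalizable. Their span is then the whole two-dimensional diagonal algebra in that basis, so it contains $I_2$. Thus, given independence, \ref{cond:comm} is equivalent to $I_2\in\Span\{W_0,W_1\}$.

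\emph{The key step.} By Lemma~\ref{lem:pr}, $\sr(U_1)=2$ iff the space $\mathcal{S}=\Span\{I_2\otimes I_2,\;F_0\otimes W_0+F_1\otimes W_1\}$ is spanned by two product matrices $B_k\otimes C_k$. By the proof of Lemma~\ref{lem:srlb}, the $B_k$ must lie in $\Span\{F_0,F_1\}$ and the $C_k$ in $\Span\{W_0,W_1\}$.

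Identify $\Span\{F_0,F_1\}\otimes\Span\{W_0,W_1\}$ with $2\times2$ coefficient matrices in the bases $\{F_0,F_1\}$ and $\{W_0,W_1\}$. Under this identification, product matrices correspond to rank-one coefficient matrices. The two generators become
\[
I_2\otimes I_2=\epsilon(F_0+F_1)\otimes(aW_0+bW_1)\;\mapsto\; R=\epsilon\begin{pmatrix}1\\1\end{pmatrix}\begin{pmatrix}a&b\end{pmatrix},
\qquad
F_0\otimes W_0+F_1\otimes W_1\;\mapsto\; \begin{pmatrix}1&0\\0&1\end{pmatrix}.
\]
So $\mathcal{S}$ is spanned by two products iff the pencil $\{sI+tR\}$ contains two linearly independent rank-one members. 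Since $R$ is already rank one, we need one more, which means solving $\det(I+tR)=1+\epsilon t(a+b)=0$. This has a solution iff $a+b\neq0$.

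Finally, translate $a+b\neq0$ into condition~\ref{cond:Z_diff}. If $a+b=0$, then $I_2=a(W_0-W_1)$, so $W_0-W_1$ is a scalar multiple of the identity. Conversely, if $W_0-W_1=cI_2$, then $c\neq0$ by independence, and uniqueness of the coefficients in $I_2=aW_0+bW_1$ forces $a=-b=1/c$, so $a+b=0$. This proves both directions of the characterization.

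The main obstacle I expect is justifying the reduction to the coefficient-matrix pencil. One must check that any two-term product spanning set for $\mathcal{S}$ can be taken inside $\Span\{F_0,F_1\}\otimes\Span\{W_0,W_1\}$, using Lemma~\ref{lem:srlb}. One must also check that rank one in these non-orthogonal bases really corresponds to being a product. The rest is the short determinant computation above.
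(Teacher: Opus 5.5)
Your proof is correct, but its core step takes a different route from the paper's.

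\textbf{How the paper argues.} The paper treats the two directions separately.
\begin{itemize}
\item For sufficiency, it simultaneously diagonalizes $W_0,W_1$ and computes $a_0+a_1=\bigl((\alpha_0-\beta_0)-(\alpha_1-\beta_1)\bigr)/\Delta$. From this, condition~\ref{cond:Z_diff} gives $a_0+a_1\neq0$, and it then writes down an explicit two-term decomposition.
\item For necessity, it supposes $a_0+a_1=0$ and uses the Steinitz exchange lemma to take $\{I_2\otimes I_2,\,B_0\otimes C_0\}$ as a basis. Applying dual functionals to the $W$-factor forces $B_0\propto I_2$, hence $F_0,F_1\propto I_2$, a contradiction.
\end{itemize}

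\textbf{How you argue.} You pass to coordinates in the basis $\{F_j\otimes W_k\}$. Both directions then collapse into the single computation $\det(I+tR)=1+\epsilon t(a+b)$. You also obtain condition~\ref{cond:Z_diff}$\Leftrightarrow a+b\neq0$ from uniqueness of coefficients rather than from eigenvalues.

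\textbf{What each route buys.} Your approach treats necessity and sufficiency symmetrically and makes it transparent that $a+b$ is the only obstruction. The paper's explicit decomposition is exactly your pair $R$ and $I+t^*R$ with $t^*=-\epsilon/(a+b)$. The paper reaches it without setting up the coordinate identification, at the cost of a separate exchange-lemma argument for necessity.

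\textbf{The reduction you flagged.} The step you call the main obstacle is routine. The $F_j\otimes W_k$ are linearly independent, so coordinates are well defined. If a nonzero $B\otimes C$ lies in $V\otimes W$, apply $\mathrm{id}\otimes\psi$ with $\psi(C)=1$: this shows $B\in V$, and symmetrically $C\in W$. Orthogonality of the bases plays no role, and you do not need the proof of Lemma~\ref{lem:srlb} for this step.
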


\begin{proof}
The operator $U_1$ is a sum of three product terms,
\[
U_1 = S_0\otimes I_2\otimes I_2 \;+\; S_3\otimes F_0\otimes W_0 \;+\; S_3\otimes F_1\otimes W_1,
\]
hence $\sr(U_1)\le 3$. Furthermore, by \Cref{lem:srlb} and the independence of $W_0,W_1$, $\sr(U_1) \geq \dim\Span\{W_0,W_1,I_2\} \geq 2$. Therefore, $\sr(U_1)=3 \Longleftrightarrow \sr(U_1) \neq 2$.

We now prove the equivalence for $\sr(U_1)=2$.

\paragraph{($\Leftarrow$)}
Assume conditions~\ref{cond:dim}, \ref{cond:comm}, and \ref{cond:Z_diff}. From condition~\ref{cond:dim} and \Cref{lem:angle1} we have $F_0+F_1 = (-1)^n I_2$ for some integer $n$.

Because $W_0$ and $W_1$ commute (condition~\ref{cond:comm}) and are unitary, they are simultaneously diagonalizable: there exists a unitary $V$ such that
\[
V^*W_0V = \diag(\alpha_0,\alpha_1),\qquad 
V^*W_1V = \diag(\beta_0,\beta_1),
\]
with $|\alpha_j|=|\beta_j|=1$. Linear independence of $W_0$ and $W_1$ implies that the vectors $(\alpha_0,\beta_0)$ and $(\alpha_1,\beta_1)$ are linearly independent; hence $\Delta := \alpha_0\beta_1 - \alpha_1\beta_0 \neq 0$.  
We claim that there exist $a_0,a_1\in\mathbb{C}$ such that $I_2 = a_0W_0 + a_1W_1$ and $a_0+a_1\neq0$. Indeed, in the diagonal basis $I_2 = a_0W_0 + a_1W_1$ becomes the system
\[
\alpha_j a_0 + \beta_j a_1 = 1 \qquad (j=0,1).
\]
Therefore,
\[
a_0 + a_1 = \frac{(\alpha_0 - \beta_0) - (\alpha_1 - \beta_1)}{\Delta}.
\]
Condition~\ref{cond:Z_diff} means $\alpha_0 - \beta_0 \neq \alpha_1 - \beta_1$; therefore $a_0 + a_1 \neq 0$.

Now we have $I_2 = a_0W_0 + a_1W_1$ with $a_0+a_1\neq0$. Then
\[
\begin{aligned}
U_1 &= S_0\otimes I_2\otimes (a_0W_0 + a_1W_1) + S_3\otimes\bigl(F_0\otimes (W_0-W_1) + (-1)^n I_2\otimes W_1\bigr).
\end{aligned}
\]
We rewrite as
\[
U_1 = \bigl(S_0 + \frac{(-1)^n}{a_0+a_1}S_3\bigr)\otimes I_2\otimes (a_0W_0+a_1W_1)
    \;+\; S_3\otimes\bigl(F_0 - \frac{(-1)^n a_0}{a_0+a_1}I_2\bigr)\otimes (W_0-W_1).
\]
This expresses $U_1$ as a sum of two product terms, hence $\sr(U_1)\le2$. Furthermore, since $\sr(U_1)\neq1$, we obtain $\sr(U_1)=2$.

\paragraph{($\Rightarrow$)}
We now assume $\sr(U_1)=2$. By \Cref{lem:pr}, there exist linearly independent product operators $B_0\otimes C_0$ and $B_1\otimes C_1$ such that $P$ and $Q$ as defined in \eqref{eq:U=SP+SQ} are linear combinations of them.

By \Cref{lem:srlb} and independence of $F_0,F_1$, we know $2\leq\dim\Span\{F_0,F_1,I_2\}\leq2$. Hence $\dim\Span\{F_0,F_1,I_2\}=2$, which is condition~\ref{cond:dim}. By \Cref{lem:angle1}, this implies $F_0+F_1 = (-1)^n I_2$ for some integer $n$. Similarly, $\dim\Span\{W_0,W_1,I_2\}=2$, so we can write $I_2 = a_0 W_0 + a_1 W_1$ for some $a_0,a_1\in\mathbb{C}$.

Next we show that $a_0+a_1\neq0$. We assume for contradiction that $a_0+a_1=0$.

The set $\{B_i\otimes C_i\}_{0\le i\le 1}$ is a basis of $\Span\{B_i\otimes C_i\}_{0\le i\le 1}$. Since $I_2\otimes I_2$ lies in this span, by the Steinitz exchange lemma we may replace an operator in $\{B_i\otimes C_i\}_{0\le i\le 1}$ to obtain another basis. Thus we assume without loss of generality that
\[
\{I_2\otimes I_2,\; B_0\otimes C_0\}
\]
is a basis of $\Span\{B_i\otimes C_i\}_{0\le i\le 1}$. We can express $Q$ in this new basis:
\[
F_0\otimes W_0 + F_1\otimes W_1 = q_0 \, I_2\otimes I_2 + q_1 \, B_0\otimes C_0
\]
for some scalars $q_0,q_1\in\mathbb{C}$.

By $a_0+a_1=0$, we have
\[
F_0\otimes W_0 + F_1\otimes W_1 = q_0 a_0 I_2\otimes \bigl(W_0-W_1\bigr) + q_1 B_0\otimes C_0.
\]

Since $\{W_0,W_1\}$ are linearly independent, there exist linear functionals $\varphi_0,\varphi_1$ such that $\varphi_j(Z_k)=\delta_{jk}$. Applying $\operatorname{id}_B\otimes\varphi_0$ and $\operatorname{id}_B\otimes\varphi_1$ to the above yields
\[
F_0 = q_0 a_0 I_2 + q_1 B_0 \varphi_0(C_0),\qquad 
F_1 = -q_0 a_0 I_2 + q_1 B_0 \varphi_1(C_0).
\]
Adding these equations gives
\[
F_0 + F_1 = q_1 B_0 (\varphi_0(C_0) + \varphi_1(C_0)).
\]
By \Cref{lem:angle1}, $F_0+F_1 = (-1)^n I_2$, hence $B_0$ is a scalar multiple of $I_2$. Substituting $B_0 \propto I_2$ into the expressions for $F_0$ and $F_1$ shows that $F_0,F_1$ are both multiples of $I_2$, contradicting condition~\ref{cond:dim}. Therefore $a_0+a_1\neq0$.

Now we have $I_2 = a_0W_0 + a_1W_1$ with $a_0+a_1\neq0$. From this we deduce that $W_0$ and $W_1$ commute (condition~\ref{cond:comm}): if $a_0\neq0$ then $W_0 = (I_2 - a_1W_1)/a_0$ commutes with $W_1$; if $a_0=0$ then $I_2 = a_1W_1$ forces $W_1$ to be scalar, which trivially commutes.  

To see that $W_0-W_1$ is not a scalar multiple of the identity (condition~\ref{cond:Z_diff}), we assume for contradiction that $W_0 - W_1 = \lambda I_2$. Substituting $W_0 = W_1 + \lambda I_2$ into $I_2 = a_0W_0 + a_1W_1$ gives
\[
I_2 = a_0(W_1+\lambda I_2) + a_1W_1 = (a_0+a_1)W_1 + a_0\lambda I_2.
\]
Thus $(a_0+a_1)W_1 = (1 - a_0\lambda)I_2$. Since $a_0+a_1\neq0$, $W_1$ is a scalar multiple of $I_2$. Then $W_0 = W_1 + \lambda I_2$ is also scalar, contradicting the linear independence of $W_0$ and $W_1$. Hence condition~\ref{cond:Z_diff} holds.

Thus we have shown that conditions~\ref{cond:dim}, \ref{cond:comm}, and \ref{cond:Z_diff} are necessary and sufficient for $\sr(U_1)=2$, completing the proof.
\end{proof}

\begin{cor}
Every controlled-controlled three-qubit unitary gate has Schmidt rank two.
\end{cor}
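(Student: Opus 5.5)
We interpret a controlled-controlled three-qubit gate as one in which two of the qubits (say $A$ and $B$, after relabeling) jointly control a single-qubit unitary $V$ on the target $C$, with $V$ not a scalar multiple of $I_2$. Up to local unitaries, such a gate can be written as
\[
U = (I_2\otimes I_2 - S_3\otimes S_3)\otimes I_2 + S_3\otimes S_3\otimes V .
\]
The plan is to show directly that $\sr(U)\le 2$, and then to rule out $\sr(U)=1$.

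For the upper bound, we regroup the terms according to the first qubit:
\[
U = S_0\otimes I_2\otimes I_2 + S_3\otimes\bigl(S_0\otimes I_2 + S_3\otimes V\bigr).
\]
This is the form \eqref{eq:U1_simplified}, with $F_0=S_0$ (take $\gamma=\delta=0$), $F_1=S_3$, $W_0=I_2$ and $W_1=V$. Here $\gamma-\delta=0$, so \Cref{lem:angle1} gives $\dim\Span\{F_0,F_1,I_2\}=2$, which is condition~\ref{cond:dim}. Condition~\ref{cond:comm} holds because $I_2$ commutes with $V$. Condition~\ref{cond:Z_diff} requires that $I_2-V$ not be a scalar multiple of $I_2$, which is exactly the assumption that $V$ is non-scalar. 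In addition, $W_0$ and $W_1$ are linearly independent precisely because $V$ is non-scalar. \Cref{thm:U1} therefore gives $\sr(U)=2$.

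For a self-contained check, we can also exhibit the decomposition explicitly:
\[
U = \bigl(I_2\otimes I_2 - S_3\otimes S_3\bigr)\otimes I_2 + S_3\otimes S_3\otimes V
 = I_2\otimes I_2\otimes I_2 + S_3\otimes S_3\otimes (V-I_2).
\]
This is visibly a sum of two product terms. Hence $\sr(U)\le2$.

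For the lower bound, $\sr(U)\ge 2$ follows from \Cref{lem:srlb}, using the fact that $I_2$ and $V$ are linearly independent. Equivalently, a product unitary $A\otimes B\otimes C$ cannot act as the identity on some control branches and as a nontrivial $V$ on another.

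The only point needing care is the interpretation of ``controlled-controlled''. If the two controls are conditioned on values other than $\ket{1}\ket{1}$, the reduction above follows by conjugating with local $\sigma_x$ gates. If the target is a different qubit, we relabel the parties. In both cases the Schmidt rank is unchanged, since it is invariant under local unitaries and under permutation of the subsystems. Finally, the trivial case where $V$ is a scalar matrix (for instance, a CCZ gate with a global phase absorbed into $V$) must be excluded, since then the gate is not genuinely controlled-controlled. That exclusion is the hypothesis implicit in the statement.
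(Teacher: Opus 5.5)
Your argument is essentially the paper's: both treat the gate as the special case $F_0=S_0$, $F_1=S_3$, $W_0=I_2$, $W_1=V$ of \eqref{eq:U1_simplified} and check conditions (1)--(3) of \Cref{thm:U1}. Your explicit two-term form $I_2\otimes I_2\otimes I_2+S_3\otimes S_3\otimes(V-I_2)$ is exactly the one the paper gives in the remark after the corollary.

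One correction: scalar $V=cI_2$ with $c\neq1$ does not need to be excluded. That gate is a controlled phase on $AB$ tensored with $I_2$, and it still has Schmidt rank two by the $A\mid BC$ bipartite rank. Only $V=I_2$ actually fails, giving the identity, which has rank one. The paper's own proof overlooks this case, since it asserts rank two for every scalar $W_1$.
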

\begin{proof}
Any controlled-controlled three qubit unitary gate has the form $S_0\otimes I_4+S_3\otimes(S_0\otimes I_2+S_3\otimes W_1)$, which is a special case of $U_1$ in \eqref{eq:U1_simplified} with $W_0=I_2$. Conditions \ref{cond:dim} and \ref{cond:comm} of \Cref{thm:U1} are satisfied. If $W_1$ is a scalar multiple of the identity, then $\sr(U_{cc})=2$; otherwise, \ref{cond:Z_diff} holds and \Cref{thm:U1} yields $\sr(U_{cc})=2$.
\end{proof}
The corollary applies to any gate that is either exactly of the controlled-controlled form or locally equivalent to one, such as the Toffoli gate and the Quantum OR gate \cite{Cohen_2013}. The statement generalizes straightforwardly: any $n$-qubit unitary controlled gate with a single target qubit has Schmidt rank two. Indeed, such a gate can be written as $U_{cc} = I_{2^n} + (\bigotimes_{i=1}^{n-1} S_3) \otimes (W - I_2)$, where $W \in \mathbb{U}(2)$, and $W$ and $I_2$ are linearly independent. 
Hence $W - I_2$ and $I_2$ are linearly independent. Together with linear independence between $\bigotimes_{i=1}^{n-1} S_3$ and $I_{2^{n-1}}$, we have $\sr(U_{cc}) \ge 2$, and therefore $\sr(U_{cc}) = 2$.

We present the following lemma that will be used in the proof of \Cref{thm:U2}.
\begin{lemma}\label{lem:angle2}
Let $F_0,F_1$ be defined as in \Cref{sec:preliminaries}. Then the matrices $F_0,F_1,S_0,S_3$ are linearly dependent if and only if one of the following conditions holds:
\begin{enumerate}[label=(\roman*)]
\item\label{cond:sr2_angle1} $\gamma-\delta = n\pi$ for some $n\in\mathbb{Z}$, and $\sin2\delta\neq0$, which yields \eqref{eq:sr2_angle1}:
\begin{equation} \label{eq:sr2_angle1}
    F_0+F_1=(-1)^nI_2.
\end{equation}
\item\label{cond:sr2_angle2} $\gamma+\delta = m\pi$ for some $m\in\mathbb{Z}$, and $\sin2\gamma\neq0$, which yields \eqref{eq:sr2_angle2}:
\begin{equation} \label{eq:sr2_angle2}
    F_0-F_1=(-1)^m(S_0-S_3).
\end{equation}
\item\label{cond:sr2_angle3} $\gamma=k\frac{\pi}{2},\delta=l\frac{\pi}{2}$ for some $k,l\in\mathbb{Z}$, such that $k$ and $l$ are both even or both odd. This case yields \eqref{eq:sr2_angle3}:
\begin{equation} \label{eq:sr2_angle3}
    \{F_0,F_1\}=\{cS_0,cS_3\} \text{ with } c=\pm1.
\end{equation} 
\end{enumerate}
\end{lemma}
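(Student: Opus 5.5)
Since $S_0$ and $S_3$ are diagonal and $F_0,F_1$ are real, the natural approach is to work in coordinates. We view the four matrices as vectors in $\mathbb{C}^4$ via their entries $(1,1),(1,2),(2,1),(2,2)$ and test linear dependence with a $4\times4$ determinant. The entries are
\[
F_0\mapsto(\cos\gamma\cos\delta,\ \cos\gamma\sin\delta,\ \sin\gamma\cos\delta,\ \sin\gamma\sin\delta),\qquad
F_1\mapsto(\sin\gamma\sin\delta,\ -\sin\gamma\cos\delta,\ -\cos\gamma\sin\delta,\ \cos\gamma\cos\delta),
\]
while $S_0\mapsto e_1$ and $S_3\mapsto e_4$.

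Expanding the determinant along the $S_0$ and $S_3$ rows leaves only the $2\times2$ minor from the off-diagonal coordinates of $F_0,F_1$:
\[
\det\begin{pmatrix}\cos\gamma\sin\delta & \sin\gamma\cos\delta\\ -\sin\gamma\cos\delta & -\cos\gamma\sin\delta\end{pmatrix}
=\sin^2\gamma\cos^2\delta-\cos^2\gamma\sin^2\delta
=\sin(\gamma-\delta)\sin(\gamma+\delta).
\]
This is the same determinant that appeared in the proof of \Cref{lem:angle1}. Hence the four matrices are dependent if and only if $\gamma-\delta=n\pi$ or $\gamma+\delta=m\pi$. I expect this reduction to be quick and routine.

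The main work is sorting this union into the three stated cases and deriving the identities \eqref{eq:sr2_angle1}--\eqref{eq:sr2_angle3}.

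\textbf{Case $\gamma-\delta=n\pi$.} By \Cref{lem:angle1}, $F_0+F_1=(-1)^nI_2=(-1)^n(S_0+S_3)$, which is a dependence and gives \eqref{eq:sr2_angle1}.

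\textbf{Case $\gamma+\delta=m\pi$.} Direct computation gives
\[
F_0-F_1=\begin{pmatrix}\cos(\gamma+\delta) & \sin(\gamma+\delta)\\ \sin(\gamma+\delta) & -\cos(\gamma+\delta)\end{pmatrix}=(-1)^m(S_0-S_3),
\]
which is \eqref{eq:sr2_angle2}.

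\textbf{Separating the cases.} The side conditions $\sin2\delta\neq0$ in (i) and $\sin2\gamma\neq0$ in (ii) are there to split off the degenerate overlap. I plan to check that if $\gamma-\delta=n\pi$ and $\sin2\delta=0$, then $\gamma,\delta$ are both multiples of $\pi/2$ with the same parity. The same holds if $\gamma+\delta=m\pi$ and $\sin2\gamma=0$. Conversely, same-parity multiples of $\pi/2$ satisfy both $\gamma\pm\delta\in\pi\mathbb{Z}$.

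\textbf{Case (iii).} Here $F_0=\pm(\cos\gamma,\sin\gamma)^T(\cos\delta,\sin\delta)$ equals $\pm S_0$ or $\pm S_3$, and $F_1$ equals the complementary one with the same sign. This gives \eqref{eq:sr2_angle3}.

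Putting the pieces together: the union of (i), (ii), (iii) is exactly $\{\gamma-\delta\in\pi\mathbb{Z}\}\cup\{\gamma+\delta\in\pi\mathbb{Z}\}$. Indeed, a point with $\gamma-\delta\in\pi\mathbb{Z}$ but $\sin2\delta=0$ lands in (iii), and similarly for (ii). This proves both directions. The only real subtlety I expect is in (iii): tracking the sign $c$ and checking that it is common to both matrices, since $(-1)^k$ and $(-1)^l$ agree when $k\equiv l$ mod $2$.
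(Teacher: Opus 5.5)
Your proposal is correct and follows essentially the paper's route: the paper likewise reduces dependence to the off-diagonal $2\times2$ system with determinant $\sin(\gamma-\delta)\sin(\gamma+\delta)$, then splits its zero set using $\sin2\delta$ and $\sin2\gamma$ exactly as you do. One small correction to your final remark: the common sign $c$ in (iii) comes for free, not from parity. The $(1,1)$ entry of $F_0$ and the $(2,2)$ entry of $F_1$ are both $\cos\gamma\cos\delta$, and the $(2,2)$ entry of $F_0$ and the $(1,1)$ entry of $F_1$ are both $\sin\gamma\sin\delta$. The equal-parity hypothesis is what makes the off-diagonal entries vanish.
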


\begin{proof}
We first show that linear dependence of $F_0,F_1,S_0,S_3$ is equivalent to $\sin(\gamma-\delta)\sin(\gamma+\delta)=0$. Linear dependence requires nonzero $(a,b,x,y)$ satisfying
\begin{equation}\label{eq:dependence}
a F_0 + b F_1 = x S_0 + y S_3.
\end{equation}
Expanding the matrix entries yields the system
\begin{align}
a\cos\gamma\cos\delta + b\sin\gamma\sin\delta &= x, \label{eq:angle2_syseq1}\\
a\cos\gamma\sin\delta - b\sin\gamma\cos\delta &= 0, \label{eq:angle2_syseq2}\\
a\sin\gamma\cos\delta - b\cos\gamma\sin\delta &= 0, \label{eq:angle2_syseq3}\\
a\sin\gamma\sin\delta + b\cos\gamma\cos\delta &= y. \label{eq:angle2_syseq4}
\end{align}

A nontrivial solution to \eqref{eq:dependence} exists if and only if equations \eqref{eq:angle2_syseq2} and \eqref{eq:angle2_syseq3} admit a nontrivial solution $(a,b)$. This occurs precisely when their determinant vanishes:
\begin{equation}\label{eq:det}
\sin^2\gamma\cos^2\delta - \cos^2\gamma\sin^2\delta
      = \sin(\gamma-\delta)\sin(\gamma+\delta)=0.
\end{equation}
Thus linear dependence holds if and only if \eqref{eq:det} is satisfied.

We now prove that $\gamma,\delta$ satisfy \eqref{eq:det} if and only if they fall into one of the three cases.

($\Rightarrow$) Assume \eqref{eq:det} holds.
\begin{itemize}
    \item If $\sin(\gamma-\delta)=0$, then $\gamma-\delta=n\pi$. 
    \begin{itemize}
        \item When $\sin2\delta\neq0$, we obtain Case~\ref{cond:sr2_angle1}.
        \item When $\sin2\delta=0$, we have $\delta=l\pi/2$, hence $\gamma=\delta+n\pi=(l+2n)\pi/2$. Since $l$ and $l+2n$ share the same parity, case~\ref{cond:sr2_angle3} follows.
    \end{itemize}
    \item If $\sin(\gamma+\delta)=0$, then $\gamma+\delta=m\pi$.
    \begin{itemize}
        \item When $\sin2\gamma\neq0$, we obtain Case~\ref{cond:sr2_angle2}.
        \item When $\sin2\gamma=0$, we have $\gamma=k\pi/2$, hence $\delta=m\pi-\gamma=(2m-k)\pi/2$. Since $k$ and $2m-k$ share the same parity, case~\ref{cond:sr2_angle3} follows.
    \end{itemize}
\end{itemize}

($\Leftarrow$) Conversely, we assume that $\gamma,\delta$ satisfy one of the three cases. Cases~\ref{cond:sr2_angle1} and~\ref{cond:sr2_angle2} clearly satisfy \eqref{eq:det}. For case~\ref{cond:sr2_angle3}, \eqref{eq:det} becomes $\sin(\frac{k-l}{2}\pi)\sin(\frac{k+l}{2}\pi)=0$ because $k+l$ is even.

The three cases are mutually exclusive: cases~\ref{cond:sr2_angle1} and~\ref{cond:sr2_angle2} cannot coincide, as their conjunction would imply $2\gamma=(m+n)\pi$, contradicting $\sin2\gamma\neq0$. Case~\ref{cond:sr2_angle3} requires $\sin2\gamma=\sin2\delta=0$, which contradicts the nonvanishing sine conditions in cases~\ref{cond:sr2_angle1} and~\ref{cond:sr2_angle2}.

A straightforward calculation verifies the relations \eqref{eq:sr2_angle1}, \eqref{eq:sr2_angle2}, and \eqref{eq:sr2_angle3} in their respective cases.
\end{proof}

The following corollary provides a more detailed characterization of $F_0$ and $F_1$, which will be useful in the proof of \Cref{thm:U2}.

\begin{cor}\label{cor:nonzerooffdiag}
In Cases~\ref{cond:sr2_angle1} and~\ref{cond:sr2_angle2} of Lemma~\ref{lem:angle2}:
\begin{enumerate}[label=(\alph*)]
    \item All off-diagonal entries of $F_0$ and $F_1$ are nonzero.
    \item For any $u, v \in \mathbb{C}$, if $uF_0 + vF_1$ is diagonal, then $u = v$ in Case~\ref{cond:sr2_angle1}, and $u = -v$ in Case~\ref{cond:sr2_angle2}.
\end{enumerate}
\end{cor}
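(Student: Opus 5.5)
The plan is to compute directly from the explicit entries of $F_0$ and $F_1$. The off-diagonal entries of $F_0$ are $\cos\gamma\sin\delta$ and $\sin\gamma\cos\delta$. Those of $F_1$ are $-\sin\gamma\cos\delta$ and $-\cos\gamma\sin\delta$. So part (a) amounts to showing that none of $\sin\gamma,\cos\gamma,\sin\delta,\cos\delta$ vanishes.

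For part (a), in Case~\ref{cond:sr2_angle1} we have $\sin2\delta\neq0$, so $\sin\delta\neq0$ and $\cos\delta\neq0$. Since $\gamma=\delta+n\pi$, we get $\sin\gamma=(-1)^n\sin\delta$ and $\cos\gamma=(-1)^n\cos\delta$, so these are also nonzero. Case~\ref{cond:sr2_angle2} is symmetric. Here $\sin2\gamma\neq0$ gives $\sin\gamma\neq0$ and $\cos\gamma\neq0$. From $\delta=m\pi-\gamma$ we get $\sin\delta=(-1)^{m+1}\sin\gamma$ and $\cos\delta=(-1)^m\cos\gamma$, which are nonzero as well. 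Hence all four off-diagonal entries are products of nonzero numbers.

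For part (b), $uF_0+vF_1$ is diagonal exactly when its $(1,2)$ entry $u\cos\gamma\sin\delta-v\sin\gamma\cos\delta$ and its $(2,1)$ entry $u\sin\gamma\cos\delta-v\cos\gamma\sin\delta$ both vanish.
\begin{itemize}
\item In Case~\ref{cond:sr2_angle1}, substituting the relations above, the $(1,2)$ entry becomes $(-1)^n(u-v)\cos\delta\sin\delta$. Since $\sin2\delta\neq0$, this forces $u=v$.
\item In Case~\ref{cond:sr2_angle2}, the $(1,2)$ entry becomes $(-1)^{m+1}(u+v)\cos\gamma\sin\gamma$. Since $\sin2\gamma\neq0$, this forces $u=-v$.
\end{itemize}
Only one off-diagonal entry is needed in each case. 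The same argument works for complex $u,v$, because the coefficients are real.

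There is no genuine obstacle. The only point needing care is the sign bookkeeping when substituting $\gamma=\delta+n\pi$ or $\delta=m\pi-\gamma$. This is harmless, since a nonzero common factor can be divided out and any overall sign drops out.
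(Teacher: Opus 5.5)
Your proof is correct and follows essentially the paper's reasoning; for (a) you make explicit (via $\sin\gamma=(-1)^n\sin\delta$, $\sin\delta=(-1)^{m+1}\sin\gamma$, etc.) what the paper merely asserts from $\sin2\delta\neq0$ resp.\ $\sin2\gamma\neq0$. For (b) the paper instead writes $uF_0+vF_1=(u-v)F_0+(-1)^n vI_2$ resp.\ $(u+v)F_0+(-1)^m v(S_3-S_0)$ using the identities of Lemma~\ref{lem:angle2} and then applies (a), while you compute the $(1,2)$ entry directly; both rest on the same fact that the off-diagonal part of $F_1$ is $-1$ times (Case~\ref{cond:sr2_angle1}) or $+1$ times (Case~\ref{cond:sr2_angle2}) that of $F_0$.
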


\begin{proof}
In Case~\ref{cond:sr2_angle1}, $\gamma-\delta=n\pi$, $\sin2\delta\neq0$; in Case~\ref{cond:sr2_angle2}, $\gamma+\delta=m\pi$, $\sin2\gamma\neq0$. Hence all off-diagonal entries of $F_0,F_1$ are nonzero, proving (a).

For (b), let $uF_0+vF_1$ be diagonal. Using Lemma~\ref{lem:angle2}:
\begin{itemize}
    \item In Case~\ref{cond:sr2_angle1}, $F_1 = (-1)^n I_2 - F_0$, so $uF_0+vF_1 = (u-v)F_0 + (-1)^n v I_2$. Since $F_0$ has nonzero off-diagonal entries, $u-v=0$.
    \item In Case~\ref{cond:sr2_angle2}, $F_1 = (-1)^m(S_3-S_0) + F_0$, so $uF_0+vF_1 = (u+v)F_0 + (-1)^m v (S_3-S_0)$. Since $F_0$ has nonzero off-diagonal entries, $u+v=0$.
\end{itemize}
Thus $u=v$ (Case~\ref{cond:sr2_angle1}) or $u=-v$ (Case~\ref{cond:sr2_angle2}).
\end{proof}

We present the following theorem characterizing three-qubit unitary gates of the form \eqref{eq:U2_simplified} that have Schmidt rank three.

\begin{theorem}\label{thm:U2}
Let $U_2$ be given by \eqref{eq:U2_simplified}. Denote each $Z_k=\begin{pmatrix}z_{k0}&z_{k1}\\ z_{k2}&z_{k3}\end{pmatrix}$. Then the Schmidt rank of $U_2$ is determined as follows.

\newcounter{dim_counter}
\setcounter{dim_counter}{0}
\newcounter{angle_counter}
\setcounter{angle_counter}{0}

\noindent
\refstepcounter{dim_counter}
\textbf{Case \arabic{dim_counter}: $\dim\Span\{Z_0,Z_1,Z_2,Z_3\}=2$.}\label{case:dim2}
\clickableitem{pr:dim2_angle1}{cond:dim2_angle1}{Under condition \ref{cond:sr2_angle1} of Lemma~\ref{lem:angle2}, $\sr(U_2)=3$.}

\clickableitem{pr:dim2_angle2}{cond:dim2_angle2}{Under condition \ref{cond:sr2_angle2} of Lemma~\ref{lem:angle2}, $\sr(U_2)=3$.}

\clickableitem{pr:dim2_angle3}{cond:dim2_angle3}{Under condition \ref{cond:sr2_angle3} of Lemma~\ref{lem:angle2}, $\sr(U_2)=3$ if and only if
        \[
        \bigl(z_{23} - z_{20} e^{i\varphi} - z_{30} + z_{33}\bigr)^2 + 4(z_{20} - z_{23})(z_{33} - z_{30} e^{i\varphi}) = 0
        \]
        and $(z_{20}, z_{23}, z_{30}, z_{33})$ does not simultaneously satisfy
        \[
        z_{20} = z_{23},\quad z_{33} = z_{30} e^{i\varphi},\quad z_{20} = z_{30}.
        \]}

\noindent
\setcounter{angle_counter}{0}
\refstepcounter{dim_counter}
\textbf{Case \arabic{dim_counter}: $\dim\Span\{Z_0,Z_1,Z_2,Z_3\}=3$.} \label{case:dim3}

\clickableitem{pr:dim3_angle1}{cond:dim3_angle1}{Under condition \ref{cond:sr2_angle1} of Lemma~\ref{lem:angle2}, $\sr(U_2)=3$ if and only if
        \[
        z_{31}\neq z_{21},\quad z_{30}z_{21}\neq z_{31}z_{20}.
        \]}
        
\clickableitem{pr:dim3_angle2}{cond:dim3_angle2} {Under condition \ref{cond:sr2_angle2} of Lemma~\ref{lem:angle2}, $\sr(U_2)=3$ if and only if         
        \[
        z_{31} \neq -z_{21},\qquad (1+e^{i\varphi})(z_{30}z_{21}-z_{31}z_{20}) \neq 2(z_{33}z_{21}-z_{31}z_{23}).
        \]}
        
\clickableitem{pr:dim3_angle3}{cond:dim3_angle3} {Under condition \ref{cond:sr2_angle3} of Lemma~\ref{lem:angle2}, $\sr(U_2)=3$.}

\end{theorem}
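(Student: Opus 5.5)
Our plan is to put every case in the framework of \Cref{lem:pr}, taking system $A$ as the control. Write $U_2=S_0\otimes P+S_3\otimes Q$ with $P=S_0\otimes Z_0+S_3\otimes Z_1$ and $Q=F_0\otimes Z_2+F_1\otimes Z_3$. Since $S_0,S_3$ are independent, $\sr(U_2)$ is the least number of product operators $B\otimes C$ whose span contains $\Span\{P,Q\}$. The bound $\sr(U_2)\le 4$ is immediate from the four-term expression. Because $\{Z_0,Z_1\}$ and $\{Z_2,Z_3\}$ are independent, \Cref{lem:srlb} gives
\[
\sr(U_2)\ge\max\bigl(\dim\Span\{Z_0,\dots,Z_3\},\ \dim\Span\{S_0,S_3,F_0,F_1\}\bigr).
\]
We treat only the situation in which $F_0,F_1,S_0,S_3$ are linearly dependent, i.e.\ one of the three conditions of \Cref{lem:angle2} holds. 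If they were independent, the bound would already give $\sr(U_2)=4$.

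The next step is to reduce everything to one question. When is there a product operator $B\otimes C$ such that $\{P,Q\}$ together with it span a space that has a basis of three product operators? Equivalently, we ask for which scalars $s,t$ the operator $sP+tQ$ has Schmidt rank at most one, or how three products can span a space containing both $P$ and $Q$. We will use the exchange argument from the proof of \Cref{thm:U1}. If $\sr(U_2)\le 3$, then the span of the three products $B_k\otimes C_k$ contains $P$ and $Q$, and by \Cref{lem:srlb} $\Span\{C_k\}=\Span\{Z_j\}$ and $\Span\{B_k\}=\Span\{S_0,S_3,F_0,F_1\}$.

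In Case~\ref{case:dim2} both spans have dimension $2$ or $3$. For conditions \ref{cond:sr2_angle1} and \ref{cond:sr2_angle2} of \Cref{lem:angle2}, we first get $\sr\ge3$, since $\dim\Span\{S_0,S_3,F_0,F_1\}=3$ by Corollary~\ref{cor:nonzerooffdiag}. We then build an explicit three-term decomposition. We expand $Z_2,Z_3$ in the basis $\{Z_0,Z_1\}$ and use \eqref{eq:sr2_angle1} or \eqref{eq:sr2_angle2} to rewrite $F_1$. This collects $U_2$ as a sum of $\cdot\otimes\cdot\otimes Z_0$ and $\cdot\otimes\cdot\otimes Z_1$. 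Each of these coefficient operators on $AB$ lives in a span that allows a shared product term, giving three terms.

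For condition \ref{cond:sr2_angle3}, $F_0,F_1$ are $\pm S_0,\pm S_3$, so $U_2$ is diagonal in $A,B$. Then $\sr(U_2)\le 2$ exactly when a $2$-dimensional space of products covers everything. We reduce $\sr(U_2)\le 3$ versus $=4$ and $=2$ to conditions on the $2\times2$ "coefficient matrix" of $(Z_2,Z_3)$ relative to $(Z_0,Z_1)$. The displayed quadratic is a discriminant condition, namely that a certain pencil has a repeated eigenvalue or rank drop. The excluded simultaneous equalities are exactly the rank-two situation, following \cite{Shen_2022}'s diagonal criterion.

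In Case~\ref{case:dim3} we have $\sr\ge3$. We then show that $\sr=3$ fails exactly when the stated expressions vanish. Let $Z_3=\alpha Z_0+\beta Z_1+\mu Z_2$ type relations, or more practically let $C_0,C_1,C_2$ be a basis of $\Span\{Z_j\}$. Under condition \ref{cond:sr2_angle1}, $\Span\{B_k\}$ is $3$-dimensional. We parametrize all three-product bases compatible with $P$ lying in the span, where $P$ needs products $S_0\otimes\cdot$ and $S_3\otimes\cdot$. Then $Q$ must lie in the span of those two plus one more product $B\otimes C$. By Corollary~\ref{cor:nonzerooffdiag}(b), matching off-diagonal parts forces the combination coefficients to be equal (or opposite under condition \ref{cond:sr2_angle2}). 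This turns the requirement into explicit polynomial equations in the $z_{kj}$; the off-diagonal entries $z_{21},z_{31}$ appear there naturally. We expect the main obstacle to be this bookkeeping. One must show that the existence of the extra product $B\otimes C$ is equivalent to the negation of the stated inequalities, handling degenerate subcases such as $z_{21}=z_{31}$. The claim $\sr=3$ in condition \ref{cond:sr2_angle3} follows directly, because $Q=\pm(S_0\otimes Z_2+S_3\otimes Z_3)$ (possibly swapped) combines with $P$ into three terms.
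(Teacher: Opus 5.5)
Your lower bound in Case~1 under conditions (i)/(ii) of Lemma~\ref{lem:angle2} is correct and shorter than the paper's. $F_0$ has nonzero off-diagonal entries, so $\dim\Span\{S_0,S_3,F_0,F_1\}=3$, and Lemma~\ref{lem:srlb} gives $\sr(U_2)\ge3$ at once; the paper instead assumes a two-term decomposition and derives $Z_2=\pm Z_3$. Your plan for Case~2 under (i)/(ii), however, rests on a false structural claim: $P$ does \emph{not} need the products $S_0\otimes\cdot$ and $S_3\otimes\cdot$, with $Q$ in their span plus one more product. Impose this with $Z_3=t_0Z_0+t_1Z_1+t_2Z_2$. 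The off-diagonal blocks of $Q=F_0\otimes(Z_2-Z_3)\pm I_2\otimes Z_3$ force the extra product to be $B\otimes(Z_2-Z_3)$. The two diagonal blocks then force $t_1=0$ and $t_0=0$, i.e.\ $Z_3\propto Z_2$. So your parametrization would ``prove'' $\sr(U_2)=4$ throughout Case~2(i). Compare $U_9=I_2\otimes I_2\otimes I_2+S_0\otimes S_3\otimes(\sigma_z-I_2)+S_3\otimes\ket{-}\bra{-}\otimes(\sigma_x-I_2)$, where $P=I_2\otimes I_2+S_3\otimes(\sigma_z-I_2)$.

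The missing mechanism is the paper's. Every product with nonzero coefficient in $P$ has a diagonal $B$, but an arbitrary one in $\Span\{S_0,S_3\}$, because the off-diagonal blocks of $P$ vanish and the $C_k$ form a basis of $\Span\{Z_j\}$. Since neither $P$ nor $Q$ is a product, some product is shared by $P$ and $Q$. For that product, $B$ is a diagonal combination of $F_0,F_1$, so by Corollary~\ref{cor:nonzerooffdiag}(b) $B\propto F_0+F_1=\pm I_2$ (resp.\ $\propto S_0-S_3$). Writing $p_kB_k$ and $q_kB_k$ through the entries of $C^{-1}$ then gives $\sr(U_2)=3\iff t_0+t_1\neq0,\ t_2\neq1$ (resp.\ $t_0-t_1\neq0,\ t_2\neq-1$). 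The converse needs explicit three-term decompositions. Only after that do the entry conditions follow, via $z_{31}=t_2z_{21}$, $z_{30}=t_0+t_1+t_2z_{20}$, $z_{21}\neq0$. This is the whole content of the case, not bookkeeping. Two smaller points: existence of your extra product corresponds to the stated inequalities \emph{holding}, not to their negation. And the reduction to $Z_3\in\Span\{Z_0,Z_1,Z_2\}$ needs an argument; in $U_9$ it is $Z_2=Z_0$ that is dependent.

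In Case~1(iii) you recognise the displayed quadratic as a discriminant, but you give no reason why it governs the Schmidt rank. ``Rank drop'' plays no role, since the coefficient matrix is invertible. You also leave $\sr(U_2)=4$ open and never show $\sr(U_2)\le3$, which is needed for $\sr\neq2\Rightarrow\sr=3$. The missing idea is this: in a two-term decomposition $P=\sum_i p_iB_i\otimes C_i$, $Q=\sum_i q_iB_i\otimes C_i$, the $B_i$ are again forced to be diagonal and all $p_i,q_i\neq0$. Hence the matrix $\begin{pmatrix}a_0&b_0\\a_1&b_1\end{pmatrix}$ expressing $Z_2,Z_3$ through $Z_0,Z_1$ equals $B^{-1}\diag(q_0/p_0,q_1/p_1)B$. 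Conversely, any diagonalization of it builds a two-term decomposition. So $\sr(U_2)=2$ iff this matrix is diagonalizable, and non-diagonalizability is exactly ``zero discriminant and not scalar.'' Citing \cite{Shen_2022} does not replace this unless you actually translate its criterion into the displayed formula.

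Finally, every upper bound you invoke (Case~1 under (i)--(iii), Case~2(iii)) is asserted rather than exhibited, and each needs a small case split. For example, under (i) the shared term $I_2\otimes(b_0Z_0+b_1Z_1)$ requires $b_0+b_1\neq0$; otherwise one switches to $a_0+a_1\neq0$, and both vanishing would make $Z_2\parallel Z_3$.
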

\begin{proof}
We put the proof of \Cref{thm:U2} in \Cref{app:proofU2}.
\end{proof}

\begin{example} \label{examples}
We now present examples of three-qubit unitary gates with Schmidt rank three. Equations \eqref{U3}--\eqref{U5} and \eqref{U12} are constructed via \Cref{thm:U1}; notably, $U_4$ is the Peres gate \cite{PhysRevA.32.3266}. Equations \eqref{U6}--\eqref{U11} are constructed via \Cref{thm:U2}. Specifically, $U_6$, $U_7$, $U_8$ correspond to \caseref{case:dim2}{cond:dim2_angle1}, \caseref{case:dim2}{cond:dim2_angle2}, \caseref{case:dim2}{cond:dim2_angle3}, respectively, while $U_9$, $U_{10}$, $U_{11}$ correspond to \caseref{case:dim3}{cond:dim3_angle1}, \caseref{case:dim3}{cond:dim3_angle2}, \caseref{case:dim3}{cond:dim3_angle3}, respectively. The diagonal gates $U_5$ and $U_8$ can be independently verified to have Schmidt rank three via the method of \cite{Shen_2022}.

\begin{align}
    U_3 &= S_0 \otimes I_2 \otimes I_2 + S_3 \otimes (S_0 \otimes \sigma_x + S_3 \otimes \sigma_z), \label{U3} \\
    U_4 &= S_0 \otimes I_2 \otimes I_2 + S_3 \otimes (S_1 \otimes \sigma_x + S_2 \otimes I_2), \label{U4}
\end{align}
\begin{equation}\label{U5}
\begin{aligned}
U_5(\psi,\delta) = &\; S_0 \otimes I_2 \otimes I_2 \\
&+ e^{i\psi} S_3 \otimes \Biggl[ S_0 \otimes \begin{pmatrix} i e^{i\delta} + 2\sin\delta & 0 \\ 0 & -i e^{-i\delta} + 2\sin\delta \end{pmatrix} \\
&+ S_3 \otimes \begin{pmatrix} i e^{i\delta} & 0 \\ 0 & -i e^{-i\delta} \end{pmatrix} \Biggr],
\end{aligned}
\end{equation}

where $\psi,\delta \in \mathbb{R}$, and $\delta \notin \frac{\pi}{2}\mathbb{Z}$.

\begin{align}
    U_6 &= S_0 \otimes (S_0 \otimes I_2 + S_3 \otimes \sigma_z)+S_3 \otimes (\ket{+}\bra{+} \otimes I_2 + \ket{-}\bra{-} \otimes \sigma_z), \label{U6} \\
    U_7 &= S_0 \otimes (S_0 \otimes I_2 + S_3 \otimes \sigma_z)+S_3 \otimes (\ket{+}\bra{-} \otimes \sigma_z + \ket{-}\bra{+} \otimes I_2), \label{U7} \\
    U_8 &= S_0 \otimes (S_0 \otimes I_2 + S_3 \otimes \begin{bmatrix}1&0\\ 0&i\end{bmatrix})+S_3 \otimes (S_0 \otimes I_2 + S_3 \otimes \begin{bmatrix}-i&0\\ 0&-1\end{bmatrix}), \label{U8} \\
    U_9 &= S_0 \otimes (S_0 \otimes I_2 + S_3 \otimes \sigma_z)+S_3 \otimes (\ket{+}\bra{+} \otimes I_2 + \ket{-}\bra{-} \otimes \sigma_x), \label{U9} \\
    U_{10} &= S_0 \otimes (S_0 \otimes I_2 + S_3 \otimes \sigma_z)+S_3 \otimes (\ket{+}\bra{-} \otimes \sigma_x + \ket{-}\bra{+} \otimes I_2), \label{U10} \\
    U_{11} &= S_0 \otimes (S_0 \otimes I_2 + S_3 \otimes \sigma_z)+S_3 \otimes (S_0 \otimes I_2 + S_3 \otimes \sigma_x). \label{U11} \\
    U_{12} &= S_0 \otimes I_2 \otimes I_2 + S_3 \otimes (\ket{+}\bra{-} \otimes \sigma_x + \ket{-}\bra{+} \otimes \sigma_z) \label{U12}
\end{align}

\end{example}

The following theorem generalizes $U_3$ to $n$ qubits:

\begin{theorem} \label{thm:nbit}
Let \(K_1, \dots, K_{n-1}, L_2, \dots, L_{n-1}, M_2, \dots, M_{n-1} \in \mathbb{U}(2)\). Define
\[
U_{13} = S_0 \otimes \left( \bigotimes_{i=1}^{n-1} K_i \right) + S_3 \otimes \left( S_0 \otimes \left( \bigotimes_{j=2}^{n-1} L_j \right) + S_3 \otimes \left( \bigotimes_{k=2}^{n-1} M_k \right) \right),
\]
If there exists an index \(q \in \{2,\dots,n-1\}\) such that $\{K_q,L_q,M_q\}$ is a linearly independent set,
then \(U_{13}\) is unitary and its Schmidt rank is three.
\end{theorem}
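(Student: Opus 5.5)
Unitarity is immediate, and the upper bound $\sr(U_{13})\le 3$ is read off the definition. Since $S_0$ and $S_3$ are orthogonal projectors with $S_0+S_3=I_2$, we have $U_{13}^*U_{13} = S_0\otimes \bigotimes_i K_i^*K_i + S_3\otimes\bigl(S_0\otimes \bigotimes_j L_j^*L_j + S_3\otimes\bigotimes_k M_k^*M_k\bigr) = I$. The same expression displays $U_{13}$ as a sum of three product terms across the $n$ parties:
\[
S_0\otimes K_1\otimes K_2\otimes\cdots\otimes K_{n-1},\quad S_3\otimes S_0\otimes L_2\otimes\cdots\otimes L_{n-1},\quad S_3\otimes S_3\otimes M_2\otimes\cdots\otimes M_{n-1}.
\]
Hence $\sr(U_{13})\le 3$, and it remains to prove $\sr(U_{13})\ge 3$.

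For the lower bound I will use the fact from \Cref{sec:preliminaries} that every bipartite Schmidt rank bounds $\sr$ from below. Take the bipartition separating qubit $q$ from all the other qubits. Across this cut, $U_{13} = X_0\otimes K_q + X_1\otimes L_q + X_2\otimes M_q$, where
\[
X_0 = S_0\otimes \textstyle\bigotimes_{i\neq q} K_i,\qquad X_1 = S_3\otimes S_0\otimes \bigotimes_{j\neq q} L_j,\qquad X_2 = S_3\otimes S_3\otimes\bigotimes_{k\neq q} M_k .
\]
The index ranges in $X_1$ and $X_2$ run over $\{2,\dots,n-1\}\setminus\{q\}$. The $X$'s are in the appropriate tensor order, i.e.\ with qubit $q$ removed.

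The key step is to show that $X_0,X_1,X_2$ are linearly independent. All tensor factors are nonzero, because unitaries are nonzero and so are $S_0,S_3$. The first two qubits then separate the three operators. $X_0$ has first factor $S_0$, while $X_1,X_2$ have first factor $S_3$. Since $S_0\perp S_3$, multiplying by $S_0\otimes I$ on the left kills $X_1,X_2$ and keeps $X_0\neq0$. Likewise $S_3\otimes S_0\otimes I$ isolates $X_1$, and $S_3\otimes S_3\otimes I$ isolates $X_2$. So any relation $aX_0+bX_1+cX_2=0$ forces $a=b=c=0$. Note that this needs $q\ge2$, so that qubits $1$ and $2$ (in the original labeling) are not the cut qubit; the hypothesis $q\in\{2,\dots,n-1\}$ gives this.

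With both $\{X_0,X_1,X_2\}$ and $\{K_q,L_q,M_q\}$ linearly independent, the bipartite rank across the cut equals $3$. This follows because a sum $\sum_t X_t\otimes Y_t$ with both families independent has operator-Schmidt rank equal to the number of terms; it is the same fact used in \Cref{lem:srlb}. Therefore $\sr(U_{13})\ge 3$, and combined with the upper bound $\sr(U_{13})=3$.

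The only delicate point is the bookkeeping of which tensor slot is removed when $q=2$. In that case the factor $S_0$ or $S_3$ in slot 2 is the cut qubit itself, so the separating argument must use only slot 1 together with the $K_q,L_q,M_q$ side. I will handle $q=2$ separately via the bipartition $\{2\}\mid$rest. Here $X_0 = S_0\otimes\bigotimes_{i\ne2}K_i$, $X_1 = S_3\otimes\bigotimes_{j\ge3}L_j$ and $X_2 = S_3\otimes\bigotimes_{k\ge3}M_k$. The cut-side factors are $K_2$, $S_0\otimes L_2$ and $S_3\otimes M_2$ on qubits $2$ and the second slot.

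Since this grouping is awkward, I will instead choose the cut \{qubit 1, qubit 2\} versus the rest, or simply cut qubit $q$ together with qubits 1 and 2 on the same side. Then the remaining side carries $\bigotimes K_i$, $\bigotimes L_j$, $\bigotimes M_k$ with $q$ excluded, and the side containing $q$ is separated by the $S$-projectors as above. Either way the independence on both sides holds, so the bound is uniform.
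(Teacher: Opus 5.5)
Your core argument (unitarity, the three-term upper bound, and cutting off the factor that carries $K_q,L_q,M_q$) is correct and complete. It uses a different bipartition from the paper's.

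Label the tensor factors $0,1,\dots,n-1$. Factor $0$ carries $S_0,S_3,S_3$, factor $1$ carries $K_1,S_0,S_3$, and factor $i\ge 2$ carries $K_i,L_i,M_i$. The paper cuts $\{0,1\}\mid\{2,\dots,n-1\}$:
\begin{itemize}
\item On the first side, $S_0\otimes K_1$, $S_3\otimes S_0$, $S_3\otimes S_3$ are independent by the projector argument.
\item On the other side the paper needs $\bigotimes_{i\ge2}K_i$, $\bigotimes_j L_j$, $\bigotimes_k M_k$ to be independent. It takes this for granted from independence of $\{K_q,L_q,M_q\}$; it follows by applying, in factor $q$, a functional dual to $K_q,L_q,M_q$.
\end{itemize}
You instead isolate factor $q$ alone, so independence on that side is literally the hypothesis. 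The remaining work is your explicit projector separation of $X_0,X_1,X_2$, which only needs the extra tensor factors to be nonzero. This makes your version slightly more self-contained.

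The paper's cut, in exchange, does not depend on $q$ and actually proves more. It suffices that the three products over factors $2,\dots,n-1$ be linearly independent, even if no single factor has $\{K_q,L_q,M_q\}$ independent. For $n=3$ the two cuts coincide.

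You should delete your last two paragraphs, because they rest on an indexing slip. For $q=2$ the isolated factor is factor $2$, which carries $K_2,L_2,M_2$. The factor carrying $K_1,S_0,S_3$ (your ``slot 2'' in $1$-based counting) is factor $1$. So your main argument already covers $q=2$ verbatim, and there is no special case.

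Moreover, one of the fallbacks you propose is wrong. If factor $q$ is placed on the same side as factors $0,1$, the other side carries $\bigotimes_{i\ge2,\,i\ne q}K_i$, $\bigotimes_{j\ne q}L_j$, $\bigotimes_{k\ne q}M_k$. These need not be independent:
\begin{itemize}
\item for $n=3$ that side is empty;
\item for $n=4$, $q=2$ and $K_3=L_3=M_3$, it spans a line, giving bipartite rank $1$.
\end{itemize}
So ``either way the independence on both sides holds'' is false. Only the other fallback, factors $\{0,1\}$ versus the rest, is valid, and that is exactly the paper's cut.
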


\begin{proof}
The operators $S_0 \otimes K_1$, $S_3 \otimes S_0$, and $S_3 \otimes S_3$ are linearly independent. 
If additionally $\{K_q, L_q, M_q\}$ is linearly independent for some $q \in \{2,\dots,n-1\}$, then the bipartite Schmidt rank of $U_{13}$ across the partition $\{1\} \mid \{2,\dots,n-1\}$ is three. Consequently,
\[
3 = \operatorname{sr}_{\{1\} \mid \{2,\dots,n-1\}}(U_{13}) \le \operatorname{sr}(U_{13}) \le 3,
\]
where the upper bound follows from the explicit expansion of $U_{13}$. Hence $\operatorname{sr}(U_{13}) = 3$.
\end{proof}

\section{Entanglement capability of three-qubit unitary gates with Schmidt rank three}
\label{sec:entanglement}
We analyze the entanglement capabilities and $l_1$-norm of quantum coherence ($\mathcal{C}_{l_1}$) for the states generated by applying gates in \Cref{examples} to $\ket{+++}$.

\begin{table}[htbp] \label{tab:entanglement_metrics}
\centering
\begin{tabular}{lcccccc}
\hline
\textbf{Gate} & $\boldsymbol{\tau_{A(BC)}}$ & $\boldsymbol{\tau_{AB}}$ & $\boldsymbol{\tau_{AC}}$ & $\boldsymbol{\tau_{ABC}}$ & $\boldsymbol{\mathcal{C}_{l_1}}$ & \textbf{Class} \\ 
\hline
$U_3$ & 0.75 & 0.25 & 0.25 & 0.25 & 7.00 & GHZ-class \\
$U_4$ & 0 & 0 & 0 & 0 & 7.00 & Fully Separable \\
$U_5$ & 1 & $\sin^2\delta$ & $\cos^2\delta$ & 0 & 7.00 & W-class \\
$U_6\text{ to }U_7$ & 0.75 & 0.25 & 0.25 & 0.25 & 7.00 & GHZ-class \\
$U_8$ & 0.75 & 0.5 & 0.25 & 0 & 7.00 & W-class \\
$U_9\text{ to }U_{11}$ & 0.75 & 0.25 & 0.25 & 0.25 & 7.00 & GHZ-class \\
$U_{12}$ & 1 & 0 & 0 & 1 & 7.00 & Maximal GHZ \\
\hline
\end{tabular}
\caption{Tangles, coherence $\mathcal{C}_{l_1}$, and classes for $U_k\ket{+++}$ with $k=3,\dots,12$.}
\label{tab:entanglement_metrics}
\end{table}

Table~\ref{tab:entanglement_metrics} summarizes the calculated tangles~\cite{3tangle}, $l_1$-norm of coherence ($\mathcal{C}_{l_1}$), and their corresponding SLOCC entanglement classes~\cite{GHZWclass}. Crucially, despite sharing an operator Schmidt rank of three, these unitary gates generate states spanning the entire tripartite tangle range ($\tau_{ABC} \in [0, 1]$). This variation demonstrates that operator-level Schmidt rank does not uniquely dictate state-level multipartite entanglement. Notably, $U_5$ satisfies conditions~\ref{cond:dim} and~\ref{cond:comm} but violates condition~\ref{cond:Z_diff}. It confines the output state to the generalized $W$-class ($\tau_{ABC}=0$), which cannot perform deterministic teleportation. However, $U_5$ functions as a bipartite entanglement router: the parameter $\delta$ continuously modulates entanglement distribution between $\tau_{AB}$ and $\tau_{AC}$ while preserving the global boundary entanglement ($\tau_{A(BC)} = 1$). Meanwhile, the invariant $l_1$-norm of coherence ($\mathcal{C}_{l_1} = 7.00$) indicates computational-basis amplitude conservation, confirming that these gates reallocate initial superpositions without altering global coherence. The following theorem generalizes the behavior of $U_5$ to a broader class of unitaries.

\begin{theorem}\label{thm:U5general}
If Conditions \ref{cond:dim} and \ref{cond:comm} hold but Condition \ref{cond:Z_diff} fails for $U_1$ in \eqref{eq:U1_simplified}, the tripartite tangle $\tau_{ABC}$ of $U_1\ket{+++}$ vanishes.
\end{theorem}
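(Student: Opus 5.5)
The plan is to use the failure of Condition~\ref{cond:Z_diff} together with Condition~\ref{cond:dim} to collapse the target operator $Q$ into a very simple form. Then I write $U_1\ket{+++}$ in a local basis adapted to $\ket{+}$ and check that Cayley's hyperdeterminant, which gives $\tau_{ABC}$, vanishes identically. Condition~\ref{cond:comm} is not actually needed in this argument.

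First, by Condition~\ref{cond:dim} and \Cref{lem:angle1}, $F_0+F_1=(-1)^nI_2$ for some integer $n$. Since Condition~\ref{cond:Z_diff} fails, $W_0=W_1+\lambda I_2$ for some $\lambda\in\mathbb{C}$. Substituting gives
\[
F_0\otimes W_0+F_1\otimes W_1=(F_0+F_1)\otimes W_1+\lambda F_0\otimes I_2=(-1)^n I_2\otimes W_1+\lambda F_0\otimes I_2 .
\]
Hence, up to normalization,
\[
U_1\ket{+++}\propto \ket{0}\ket{+}\ket{+}+\ket{1}\bigl(\ket{+}\ket{c}+\ket{b}\ket{+}\bigr),
\]
where $\ket{c}=(-1)^nW_1\ket{+}$ and $\ket{b}=\lambda F_0\ket{+}$ are arbitrary, possibly unnormalized, vectors.

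Next, $\tau_{ABC}$ is invariant under local unitaries. I therefore apply the Hadamard on $B$ and on $C$, which sends $\ket{+}\mapsto\ket{0}$. Writing $H\ket{b}=p\ket{0}+q\ket{1}$ and $H\ket{c}=r\ket{0}+s\ket{1}$, the state becomes
\[
\ket{000}+(p+r)\ket{100}+s\ket{101}+q\ket{110}.
\]
Its only possibly nonzero amplitudes are $a_{000},a_{100},a_{101},a_{110}$; in particular $a_{001}=a_{010}=a_{011}=a_{111}=0$.

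Finally, I recall $\tau_{ABC}=4\lvert\mathrm{Det}(a)\rvert$, where $\mathrm{Det}$ is Cayley's hyperdeterminant. I go through its terms one by one.
\begin{itemize}
\item The squared terms are $a_{000}^2a_{111}^2$, $a_{001}^2a_{110}^2$, $a_{010}^2a_{101}^2$ and $a_{100}^2a_{011}^2$. Each contains one of $a_{111},a_{001},a_{010},a_{011}$.
\item The six cross terms with coefficient $-2$ each pair complementary index pairs, so each contains at least one of $a_{111},a_{001},a_{010},a_{011}$.
\item The two quartic terms $4a_{000}a_{011}a_{101}a_{110}$ and $4a_{111}a_{001}a_{010}a_{100}$ likewise contain a vanishing factor.
\end{itemize}
Hence $\mathrm{Det}=0$ and $\tau_{ABC}=0$.

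The main point of care is the bookkeeping in this last step, namely verifying that every monomial of the hyperdeterminant contains a vanishing amplitude. This follows from the structural fact that every monomial of $\mathrm{Det}$ contains at least one amplitude with an odd number of ones or with index $011$. If desired, I would cross-check the result by noting that the state lies in the W/biseparable closure, being a superposition of $\ket{000}$ and single-excitation-type terms relative to the $\ket{1}_A$ branch. I also note that the degenerate cases $\lambda=0$ or $\ket{c}\propto\ket{+}$ are covered automatically, since the formula does not need $p,q,r,s$ to be nonzero.
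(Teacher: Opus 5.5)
Your proof is correct, and it takes a different route from the paper's. The paper uses Condition~\ref{cond:comm} to simultaneously diagonalize $W_0,W_1$ by a local unitary on the third qubit. It then uses Condition~\ref{cond:dim} and \Cref{lem:angle1} to rotate $F_0,F_1$ into $S_0,S_3$ on the second qubit, and asserts that a direct calculation gives $\tau_{ABC}=0$ for the normal form $S_0\otimes I_4+S_3\otimes(S_0\otimes W_0+S_3\otimes W_1)$.

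You never normalize the gate. The identity $F_0\otimes W_0+F_1\otimes W_1=(-1)^nI_2\otimes W_1+\lambda F_0\otimes I_2$ shows directly that the $\ket{1}_A$ branch of the output is a sum of two product vectors, each sharing a $\ket{+}$ factor with the $\ket{0}_A$ branch. You then apply local unitaries only to the output state, where invariance of $\tau_{ABC}$ is unproblematic. This approach gains three things:
\begin{itemize}
\item It shows that Condition~\ref{cond:comm} is redundant, since it follows from $W_0=W_1+\lambda I_2$.
\item It avoids a subtlety in the paper's reduction. Conjugating the gate by local unitaries replaces the input $\ket{+++}$ by a different product state, so the paper's ``direct calculation'' implicitly has to cover general product inputs on $B$ and $C$.
\item It works verbatim for any product input, because only the shared single-qubit factors on $B$ and $C$ matter.
\end{itemize}
The paper's route, in exchange, gives an explicit normal form of the gate that ties the theorem to the example $U_5$.

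One small correction concerns your closing ``structural fact'' that every monomial contains an amplitude with an odd number of ones or with index $011$. That fact does not by itself give vanishing, because $a_{100}=p+r$ has odd weight and need not be zero. The right structural reason is this: the support $\{000,100,101,110\}$ contains no complementary pair, so all ten terms built from complementary pairs vanish. The two quartic terms contain $a_{011}$ and $a_{111}$ respectively. Your explicit term-by-term check already establishes this, so the proof itself is unaffected.
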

\begin{proof}
By Condition \ref{cond:comm}, $W_0$ and $W_1$ commute and are simultaneously diagonalizable. Via a local unitary on the third qubit, $W_0 = \operatorname{diag}(\mu_0, \mu_1)$ and $W_1 = \operatorname{diag}(\lambda_0, \lambda_1)$, for some $\mu_0,\mu_1,\lambda_0,\lambda_1 \in \mathbb{C}$. The failure of Condition \ref{cond:Z_diff} implies a uniform spectral shift $\lambda_i = \mu_i + c$ for $c \in \mathbb{C}$.
By Condition \ref{cond:dim} and Lemma \ref{lem:angle1}, a local unitary on the second qubit allows us to set $F_0 = S_0$, $F_1 = S_3$, yielding:
\[U_1 = S_0 \otimes I_4 + S_3 \otimes (S_0 \otimes W_0 + S_3 \otimes W_1).\]
Direct calculation confirms that $\tau_{ABC}$ for $U_1\ket{+++}$ is zero.
\end{proof}

The following theorem generalizes $U_8$ (which falls into \caseref{case:dim2}{cond:dim2_angle3}) and shows that in this case, Schmidt rank three forces $U_2\ket{+++}$ into the W-class.
\begin{theorem}\label{thm:U8general}
In \caseref{case:dim2}{cond:dim2_angle3}, if $\operatorname{sr}(U_2)=3$, then the tripartite tangle $\tau_{ABC}$ of $U_2\ket{+++}$ vanishes.
\end{theorem}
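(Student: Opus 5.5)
We will reduce everything to an explicit diagonal computation and then compare the Cayley hyperdeterminant of the output state with the algebraic criterion of \caseref{case:dim2}{cond:dim2_angle3}.

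\textbf{Step 1: $U_2$ is diagonal.} In this case $\dim\Span\{Z_0,Z_1,Z_2,Z_3\}=2$. Since $Z_0=I_2$ and $Z_1=\diag(1,e^{i\varphi})$ with $\varphi\notin 2\pi\mathbb{Z}$, the span $\Span\{Z_0,Z_1\}$ is the full space of diagonal $2\times 2$ matrices. Hence $Z_2$ and $Z_3$ are diagonal: $Z_2=\diag(z_{20},z_{23})$ and $Z_3=\diag(z_{30},z_{33})$. Condition \ref{cond:sr2_angle3} of Lemma~\ref{lem:angle2} gives $\{F_0,F_1\}=\{cS_0,cS_3\}$ with $c=\pm1$. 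Up to relabelling $Z_2\leftrightarrow Z_3$, which is a symmetric change in what follows, we may take $F_0=cS_0$ and $F_1=cS_3$. Then
\[
U_2=\sum_{a,b\in\{0,1\}}\ket{ab}\bra{ab}\otimes D_{ab},
\]
with $D_{00}=I_2$, $D_{01}=Z_1$, $D_{10}=cZ_2$ and $D_{11}=cZ_3$.

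\textbf{Step 2: the hyperdeterminant.} The output state $U_2\ket{+++}$ has amplitudes $\psi_{abc}=\tfrac{1}{2\sqrt2}(D_{ab})_{cc}$. Its three-tangle is $\tau_{ABC}=4|\mathrm{Det}(\psi)|$, where $\mathrm{Det}$ is the Cayley hyperdeterminant. We will use the standard fact that $\mathrm{Det}(\psi)$ equals, up to a nonzero constant, the discriminant of the quadratic $t\mapsto\det(M_0+tM_1)$. Here $M_a=(\psi_{abc})_{b,c}$ are the two slices along qubit $A$. Explicitly,
\[
\det(M_0+tM_1)\propto(e^{i\varphi}-1)+t\,c\,(z_{20}e^{i\varphi}+z_{33}-z_{23}-z_{30})+t^2(z_{20}z_{33}-z_{23}z_{30}),
\]
using $c^2=1$. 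So $\tau_{ABC}=0$ if and only if
\[
(z_{20}e^{i\varphi}+z_{33}-z_{23}-z_{30})^2-4(e^{i\varphi}-1)(z_{20}z_{33}-z_{23}z_{30})=0 .
\]

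\textbf{Step 3 (main obstacle): matching with the Schmidt-rank criterion.} By \caseref{case:dim2}{cond:dim2_angle3}, $\sr(U_2)=3$ gives
\[
(z_{23}-z_{20}e^{i\varphi}-z_{30}+z_{33})^2+4(z_{20}-z_{23})(z_{33}-z_{30}e^{i\varphi})=0 .
\]
We need to show that this expression coincides with the discriminant from Step 2. Write $x=z_{20}e^{i\varphi}$ and expand both squares. The two squares differ only in cross terms containing $z_{20}e^{i\varphi}(z_{33}-z_{23}-z_{30})$. We will check whether these differences are absorbed exactly by the difference of the two product terms. If they are not, we will first try the other orientation $F_0=cS_3$, $F_1=cS_0$, since the appendix proof of \Cref{thm:U2} may have used that ordering. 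If that also fails, we will pass to the equivalent slicing along qubit $B$ or $C$; the hyperdeterminant is invariant under permuting parties, and the Schmidt-rank condition was presumably derived from one particular slicing. As a last resort we will use unitarity, $|z_{kj}|=1$, to reconcile the forms. This bookkeeping is the step I expect to cost the most effort.

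\textbf{Conclusion.} Once the two polynomials are identified, $\sr(U_2)=3$ forces the discriminant to vanish. Hence $\mathrm{Det}(\psi)=0$ and $\tau_{ABC}=0$. The extra non-degeneracy condition in the case statement only excludes the rank-two situation and plays no role here.
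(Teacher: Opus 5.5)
Your plan is the paper's own argument in different coordinates: compute $\tau_{ABC}$ of $U_2\ket{+++}$ directly, then show that the Case~1.(3) criterion for $\sr(U_2)=3$ forces it to vanish. Steps 1 and 2 are correct, including the slice polynomial, $\tau_{ABC}=4|\mathrm{Det}|$, and the use of $c^2=1$.

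\textbf{The gap is Step 3.} It is the only step with real content, and you give a list of contingencies instead of a verification. As written, nothing yet links $\sr(U_2)=3$ to $\mathrm{Det}=0$. The step closes directly, without reorienting, reslicing, or using $|z_{kj}|=1$. Put $x=z_{20}e^{i\varphi}$. Then
\[
(x+z_{33}-z_{23}-z_{30})^2-(z_{23}-x-z_{30}+z_{33})^2=4(z_{33}-z_{30})(x-z_{23}),
\]
while
\[
-4(e^{i\varphi}-1)(z_{20}z_{33}-z_{23}z_{30})-4(z_{20}-z_{23})(z_{33}-z_{30}e^{i\varphi})=-4(z_{33}-z_{30})(x-z_{23}).
\]
So your discriminant and the Case~1.(3) polynomial are identical.

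Switching to $F_0=cS_3$, $F_1=cS_0$ would not have been a legitimate fallback. The Case~1.(3) polynomial is derived under $F_0=S_0$, $F_1=S_3$ and is not invariant under $Z_2\leftrightarrow Z_3$. The orientation is therefore fixed by that normalization, not free to choose.

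\textbf{A cleaner route.} The paper avoids this bookkeeping by working with $Z_2=a_0Z_0+a_1Z_1$ and $Z_3=b_0Z_0+b_1Z_1$. In your notation the reason is a one-line factorization. Your slices satisfy $M_1=cKM_0$ with $K=\begin{pmatrix}a_0&a_1\\ b_0&b_1\end{pmatrix}$, so $\det(M_0+tM_1)\propto(e^{i\varphi}-1)\det(I_2+tcK)$. Its discriminant is $(e^{i\varphi}-1)^2\bigl[(a_0-b_1)^2+4a_1b_0\bigr]$. This gives exactly the paper's formula
\[
\tau_{ABC}=\tfrac{1}{16}|e^{i\varphi}-1|^2\,\bigl|(a_0-b_1)^2+4a_1b_0\bigr|.
\]
The conclusion follows because $\sr(U_2)=3$ means $K^{T}$, the change-of-basis matrix in the proof of Case~1.(3), is not diagonalizable. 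Hence it has a repeated eigenvalue, and its discriminant is zero.

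Your final remark is right: the non-scalar condition plays no role in the vanishing of $\tau_{ABC}$.
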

\begin{proof}
In this case, we have
\[
U_2 = S_0 \otimes \bigl( S_0 \otimes I_2 + S_3 \otimes \begin{pmatrix} 1 & 0 \\ 0 & e^{i\varphi} \end{pmatrix} \bigr) + S_3 \otimes \bigl( S_0 \otimes (a_0 Z_0 + a_1 Z_1) + S_3 \otimes (b_0 Z_0 + b_1 Z_1) \bigr)
\]
for some $a_0,a_1,b_0,b_1 \in \mathbb{C}$. Direct calculation shows that the three-tangle of $U_2\ket{+++}$ is
\[
\tau_{ABC} = \frac{1}{16} |e^{i\varphi} - 1|^2 \cdot \bigl| (a_0 - b_1)^2 + 4a_1 b_0 \bigr|.
\]
As demonstrated at the end of the proof of \caseref{case:dim2}{cond:dim2_angle3}, the condition $\operatorname{sr}(U_2)=3$ implies $(a_0 - b_1)^2 + 4a_1 b_0 = 0$. Therefore $\tau_{ABC} = 0$.
\end{proof}

In contrast to $U_5$ and $U_8$, the tripartite tangle is maximized by generalizing $U_{12}$ to $U_{12}'$:
\begin{equation}
U_{12}' = S_0 \otimes I_4 + S_3 \otimes (\ket{+}\bra{-} \otimes V + \ket{-}\bra{+} \otimes W),
\end{equation}
where $V \in \mathbb{U}(2)$ and $W = e^{i\alpha}\ket{-}\bra{+} + e^{i\beta}\ket{+}\bra{-}$ ($\alpha, \beta \in \mathbb{R}$). Here, $U_{12}'\ket{+++}$ yields $\tau_{ABC} = 1$. Notably, $\sr(U_{12}') \neq 3$ if $V$ and $W$ become linearly dependent, further illustrating the decoupling between operator-level Schmidt rank and state-level multipartite entanglement. The resulting maximal GHZ-class state remains an optimal resource for quantum teleportation.

The results of this section reveal a nontrivial connection between the Schmidt rank conditions (see \Cref{thm:U8general}) and output state entanglement. The family $U_5(\psi,\delta)$ shows that fixed Schmidt rank gates can continuously redistribute bipartite entanglement. Characterizing the general relationship between operator-level Schmidt rank and entanglement capability thus remains an interesting open direction.

\section{Implementation of three-qubit unitary gates of Schmidt rank three}
\label{sec:sr3decomp}
We decompose unitary gates in \Cref{examples} into quantum circuits using techniques in \cite{Shende_2006,Chen_2015,bullock2003smallercircuitsarbitrarynqubit}. Let $q_0,q_1,q_2$ be the three qubits. In our diagrams, the top line corresponds to $q_0$ and the bottom line to $q_2$, and basis states are written as $\ket{q_2q_1q_0}$. We use the following notation: $CCZ_{(ab)\rightarrow c}$ denotes a Toffoli gate with controls $a,b$ and target $c$; $CZ_{a,b}$ and $CH_{a,b}$ denote controlled-$Z$ and controlled-Hadamard gates with control $a$ and target $b$, respectively; $R_z(\theta)=\diag(e^{-i\theta/2},e^{i\theta/2})$; $X=\sigma_x$; $\sqrt{X}$ is its square root; and $H = \frac{1}{\sqrt{2}}\begin{pmatrix} 1 & 1 \\ 1 & -1 \end{pmatrix}$ is the Hadamard gate. For instance, $CNOT_{2,0}=S_0\otimes I_4+S_3\otimes I_2 \otimes \sigma_x$ has its control on $q_2$ (bottom line) and target on $q_0$ (top line).

Notably, $U_3$ shown in ~\Cref{fig:U3}, $U_6$ in \eqref{U6_decomp}, and $U_{11}$ shown in \Cref{fig:U11} each use only three CNOT gates. For $U_6$ in \eqref{U6_decomp}, the decomposition consists of $CH$, $CZ$, and $CH$ gates, each requiring one CNOT, totaling three. This establishes the existence of a Schmidt-rank-three three-qubit unitary gate implementable with three CNOT gates, which is a question left open in \cite{Song_2021}.

According to \cite{shende2008cnotcosttoffoligates}, the diagonal gates $U_5$ and $U_8$ can be shown to require at least four CNOT gates; our decompositions in Figs.~\ref{fig:U8} and~\ref{fig:U5} achieve this minimum.
Peres gate $U_4$ requires at least five CNOT gates \cite{shende2008cnotcosttoffoligates}.

We also present decompositions for $U_7$ in \eqref{U7_decomp}, $U_9$ in \eqref{U9_decomp}, and $U_{10}$ in \eqref{U10_decomp}, where $U_7$ and $U_{10}$ use the Peres and Toffoli gates as building blocks.

\begin{figure}[htbp]
\centering
\large
\resizebox{0.95\textwidth}{!}{%
\begin{quantikz}
& \gate{R_Z(-\pi)} & \gate{\sqrt{X}} & \gate{R_Z(\frac{3\pi}{4})} & \gate{\sqrt{X}} & \targ{} & \gate{\sqrt{X}} & \gate{R_Z(\frac{3\pi}{4})} & \gate{\sqrt{X}} & \gate{R_Z(-\pi)} & \targ{} & \gate{\sqrt{X}} & \gate{R_Z(\frac{\pi}{4})} & \targ{} & \gate{R_Z(\frac{\pi}{4})} & \gate{\sqrt{X}} & \qw\\
& \qw & \qw & \qw & \qw & \ctrl{-1} & \gate{X} & \qw & \qw & \qw & \qw & \qw & \qw & \ctrl{-1} & \gate{X} & \qw & \qw\\
& \qw & \qw & \qw & \qw & \qw & \qw & \qw & \qw & \qw & \ctrl{-2} & \qw & \qw & \qw & \qw & \qw & \qw
\end{quantikz}
}
\caption{Quantum circuit for $U_3$ defined in \eqref{U3}.}
\label{fig:U3}
\end{figure}

\begin{figure}[htbp]
\centering
\large
\resizebox{0.95\textwidth}{!}{%
\begin{quantikz}
& \gate{R_Z\left(\frac{\pi}{2}\right)} & \targ{} & \gate{R_Z\left(-\frac{\pi}{2}\right)} & \targ{} & \gate{R_Z\left(\frac{\pi}{4}\right)} & \targ{} & \gate{R_Z\left(-\frac{\pi}{4}\right)} & \targ{} & \qw & \qw \\
& \qw & \ctrl{-1} & \qw & \qw & \qw & \ctrl{-1} & \gate{R_Z\left(\frac{\pi}{4}\right)} & \qw & \qw & \qw \\
& \qw & \qw & \qw & \ctrl{-2} & \qw & \qw & \qw & \ctrl{-2} & \qw & \qw
\end{quantikz}
}
\caption{Quantum circuit for $U_8$ defined in \eqref{U8}.}
\label{fig:U8}
\end{figure}

\begin{figure}[htbp]
\centering
\large
\resizebox{0.95\textwidth}{!}{%
\begin{quantikz}
& \gate{R_Z(-\pi)} & \gate{\sqrt{X}} & \gate{R_Z(\frac{3\pi}{4})} & \targ{} & \gate{R_Z(\frac{\pi}{4})} & \gate{\sqrt{X}} & \targ{} & \gate{R_Z(-\pi)} & \gate{\sqrt{X}} & \gate{R_Z(\frac{3\pi}{4})} & \targ{} & \gate{R_Z(\frac{\pi}{4})} & \gate{\sqrt{X}} & \qw & \qw\\
& \qw & \qw & \qw & \qw & \qw & \qw & \ctrl{-1} & \qw & \qw & \qw & \qw & \qw & \qw & \qw & \qw\\
& \gate{X} & \qw & \qw & \ctrl{-2} & \qw & \qw & \qw & \qw & \qw & \qw & \ctrl{-2} & \gate{X} & \qw & \qw & \qw
\end{quantikz}
}
\caption{Quantum circuit for $U_{11}$ defined in \eqref{U11}.}
\label{fig:U11}
\end{figure}

\begin{figure}[!htb]
\centering
\large
\resizebox{0.95\textwidth}{!}{%
\begin{quantikz}
& \gate{R_Z(-\frac{\pi}{2})} & \targ{} & \gate{R_Z(\frac{\pi}{4})} & \targ{} & \gate{R_Z(-\frac{\pi}{4})} & \targ{} & \gate{R_Z(\frac{\pi}{2})} & \targ{} & \qw \\
& \qw & \ctrl{-1} & \qw & \qw & \qw & \ctrl{-1} & \qw & \qw & \qw \\
& \qw & \qw & \qw & \ctrl{-2} & \qw & \qw & \qw & \ctrl{-2} & \qw
\end{quantikz}
}
\caption{Quantum circuit for $U_5$ defined in \eqref{U5}.}
\label{fig:U5}
\end{figure}

\begin{align}
    U_6 &= CH_{2,1} \, CZ_{1,0} \, CH_{2,1} \label{U6_decomp} \\
    U_7 &= (\sigma_x \otimes I_4) \, CCZ_{(2,1)\rightarrow0} \, (\sigma_x \otimes I_4) \, (I_2 \otimes H \otimes H) \, U_4 \, (I_2 \otimes H \otimes H) \label{U7_decomp} \\
    U_9 &= CH_{2,0} CH_{2,1} CZ_{1,0} CH_{2,0} CH_{2,1} \label{U9_decomp} \\
    U_{10} &= (\sigma_x \otimes I_4) \, CCZ_{(2,1)\rightarrow0} \, (\sigma_x \otimes I_4) \, (I_2 \otimes H \otimes I_2) \, U_4 \, (I_2 \otimes H \otimes I_2) \label{U10_decomp}
\end{align}

\section{Conclusion}
\label{sec:conclusion}In this paper, we characterized three-qubit unitary operators of Schmidt rank three that are in controlled form, establishing necessary and sufficient conditions for such gates to have Schmidt rank three. We provided explicit examples, and showed that some of these gates admit optimal decompositions using only three CNOT gates, achieving the known lower bound. Several interesting directions for future research include, but are not limited to: determining which three-qubit unitaries of Schmidt rank three can be implemented with exactly three CNOT gates; extending this characterization to 
n-qubit unitary gates of Schmidt rank three; and characterizing Schmidt-rank-three unitaries that are not in controlled form.

\begin{appendices}
\section{Proof of \Cref{thm:U2}}\label[appendix]{app:proofU2}
\begin{proof}
To simplify, we first put $U_2$ into a convenient form.
In Case \ref{case:dim2}, $Z_2,Z_3\in\Span\{Z_0,Z_1\}$; write $Z_2=a_0Z_0+a_1Z_1$ and $Z_3=b_0Z_0+b_1Z_1$.
In Case \ref{case:dim3}, we may assume without loss of generality that $Z_3=t_0Z_0+t_1Z_1+t_2Z_2$. 
Indeed, if instead $Z_2$ is a combination of $\{Z_0,Z_1,Z_3\}$, applying $\begin{pmatrix}0&-1\\1&0\end{pmatrix}$ on the left and $\begin{pmatrix}0&1\\-1&0\end{pmatrix}$ on the right of the middle system swaps $F_0$ and $F_1$, thereby exchanging $Z_2$ and $Z_3$. 
If $Z_0$ or $Z_1$ lies in the span of the other three, applying $\begin{pmatrix}\cos\gamma&\sin\gamma\\\sin\gamma&-\cos\gamma\end{pmatrix}$ on the left and $\begin{pmatrix}\cos\delta&\sin\delta\\\sin\delta&-\cos\delta\end{pmatrix}$ on the right of the middle system interchanges $\{F_0,F_1\}$ with $\{S_0,S_3\}$, allowing us to relabel so that $Z_3$ is the dependent one.
Thus we may always write $Z_3=t_0Z_0+t_1Z_1+t_2Z_2$ with $\{Z_0,Z_1,Z_2\}$ independent.

\hypertarget{pr:dim2_angle1}{}
\medskip
\noindent\textbf{\caseref{case:dim2}{cond:dim2_angle1} } 
Using \eqref{eq:sr2_angle1}, $F_1 = (-1)^n I_2 - F_0$. $P,Q$ from \eqref{eq:U=SP+SQ} become
\begin{align}
P &= S_0 \otimes Z_0 + S_3 \otimes Z_1,\\
Q &=  F_0 \otimes (a_0Z_0+a_1Z_1) + [(-1)^n I_2 - F_0] \otimes (b_0Z_0+b_1Z_1)\label{eq:Q_decomp_1i}.
\end{align}
We first show $\sr(U_1)\le 3$ by considering three cases for $a_i,b_i$:

\begin{enumerate}[label=(\alph*)]
\item $b_0 \neq -b_1$:
\begin{equation}
\label{eq:U_case1}
\begin{aligned}
U_2 &= \bigl(\frac{S_0}{b_0+b_1} + (-1)^n S_3\bigr) \otimes I_2 \otimes (b_0Z_0+b_1Z_1) \\
  &\quad + S_0 \otimes \bigl(S_0 - \frac{b_0}{b_0+b_1} I_2\bigr) \otimes (Z_0 - Z_1) \\
  &\quad + S_3 \otimes F_0 \otimes \bigl((a_0-b_0)Z_0 + (a_1-b_1)Z_1\bigr).
\end{aligned}
\end{equation}

\item $b_0 = -b_1$ but $a_0 \neq -a_1$:
\begin{equation}
\label{eq:U_case2}
\begin{aligned}
U_2 &= \bigl(\frac{S_0}{a_0+a_1} + (-1)^n S_3\bigr) \otimes I_2 \otimes (a_0Z_0+a_1Z_1) \\
  &\quad + S_0 \otimes \bigl(S_3 - \frac{a_1}{a_0+a_1} I_2\bigr) \otimes (Z_1 - Z_0) \\
  &\quad - S_3 \otimes F_1 \otimes \bigl((a_0-b_0)Z_0 + (a_1+b_0)Z_1\bigr).
\end{aligned}
\end{equation}

\item $b_0 = -b_1$ and $a_0 = -a_1$:
Then $Z_2 = a_0(Z_0 - Z_1)$ and $Z_3 = b_0(Z_0 - Z_1)$ are proportional, contradicting linear independence of $Z_2,Z_3$. Thus this case cannot occur.
\end{enumerate}

Every possible case gives three product terms, hence $\sr(U_1) \leq 3$.

Now $\sr(U_1)=1$ is impossible since $\dim\Span\{Z_0, Z_1, Z_2, Z_3\}=2$. We assume for contradiction that $\sr(U_1)=2$. By \Cref{lem:pr}, there exist linearly independent product operators $B_0\otimes C_0$ and $B_1\otimes C_1$ such that $P$ and $Q$ are linear combinations of them:

\begin{equation}\label{eq:PQBC_1i}
P = \sum_{i=0}^{1} p_i B_i \otimes C_i, \quad
Q = \sum_{i=0}^{1} q_i B_i \otimes C_i.
\end{equation}

Because $\sr(P)=2$, both $p_0$ and $p_1$ are nonzero. Moreover, by \Cref{lem:srlb}, $\Span\{C_0,C_1\} = \Span\{Z_0,Z_1\}$; therefore, $C_0,C_1$ are linearly independent.

We write each $B_i$ as $B_i = \begin{pmatrix} b_{i0} & b_{i1} \\ b_{i2} & b_{i3} \end{pmatrix}$. Since the off-diagonal blocks of $P$ vanish by \eqref{eq:U2_simplified}, from \eqref{eq:PQBC_1i}:
\[
\sum_{i=0}^{1} p_i b_{i1} C_i = 0,\qquad \sum_{i=0}^{1} p_i b_{i2} C_i = 0.
\]
Independence of $C_0,C_1$ together with $p_i\neq0$ forces $b_{i1}=b_{i2}=0$ for each $i$; hence each $B_i$ is diagonal: $B_i=\diag(b_{i0},b_{i3})$.

Let $F_0 = \begin{pmatrix} f_{00} & f_{01} \\ f_{02} & f_{03} \end{pmatrix}$. Comparing the top-right block of $Q$ in \eqref{eq:Q_decomp_1i} and \eqref{eq:PQBC_1i} together with the diagonality of $B_i$, we obtain
$f_{01}\bigl((a_0-b_0)Z_0 + (a_1-b_1)Z_1\bigr)=0$. Hence, independence of $Z_0,Z_1$ gives $f_{01}(a_0-b_0)=0$ and $f_{01}(a_1-b_1)=0$. Since $f_{01}\neq0$ by \Cref{cor:nonzerooffdiag}, we obtain $a_0=b_0,\; a_1=b_1$. Then $Z_2 = Z_3$, contradicting the linear independence of $Z_2,Z_3$. Therefore $\sr(U_1)=2$ is impossible.

Since $\sr(U_1)\le 3$ and neither $1$ nor $2$ can occur, we must have $\sr(U_1)=3$.

\hypertarget{pr:dim2_angle2}{}
\medskip
\noindent\textbf{\caseref{case:dim2}{cond:dim2_angle2} } 
Using \eqref{eq:sr2_angle2}, $F_0 - (-1)^m(S_0 - S_3) = F_1$. $P,Q$ from \eqref{eq:U=SP+SQ} become
\begin{align}
P &= S_0 \otimes Z_0 + S_3 \otimes Z_1,\\
Q &= F_0 \otimes (a_0Z_0+a_1Z_1) + [F_0-(-1)^m(S_0-S_3)] \otimes (b_0Z_0+b_1Z_1).
\end{align}

We first show $\sr(U_1)\le 3$ by considering three cases for $a_i,b_i$:

\begin{enumerate}[label=(\alph*)]
\item $b_0 \neq b_1$:
\begin{equation}
\label{eq:U_case2_1}
\begin{aligned}
U_2 &= \bigl(\frac{S_0}{b_1-b_0} + (-1)^m S_3\bigr) \otimes (S_3-S_0) \otimes (b_0Z_0+b_1Z_1) \\
    &\quad + S_3 \otimes F_0 \otimes \bigl((a_0+b_0)Z_0 + (a_1+b_1)Z_1\bigr) \\
    &\quad + S_0 \otimes \bigl(S_0 - \frac{b_0}{b_1-b_0}(S_3-S_0)\bigr) \otimes (Z_0+Z_1).
\end{aligned}
\end{equation}

\item $b_0 = b_1$ but $a_0 \neq a_1$:
\begin{equation}
\label{eq:U_case2_2}
\begin{aligned}
U_2 &= \bigl(\frac{S_0}{a_1-a_0} + (-1)^{m+1}S_3\bigr) \otimes (S_3-S_0) \otimes (a_0Z_0+a_1Z_1) \\
    &\quad + S_3 \otimes F_1 \otimes \bigl((a_0+b_0)Z_0 + (a_1+b_1)Z_1\bigr) \\
    &\quad + S_0 \otimes \bigl(S_0 - \frac{a_0}{a_1-a_0}(S_3-S_0)\bigr) \otimes (Z_0+Z_1).
\end{aligned}
\end{equation}

\item $b_0 = b_1$ and $a_0 = a_1$:
Then $Z_2$ and $Z_3$ are both proportional to $Z_0+Z_1$, contradicting linear independence of $Z_2,Z_3$. Thus this case cannot occur.
\end{enumerate}

Every possible case gives three product terms, hence $\sr(U_1) \leq 3$.

Since $\dim\Span\{Z_0,Z_1,Z_2,Z_3\}=2$, $\sr(U_1)=1$ is impossible by \Cref{lem:srlb}. Assuming $\sr(U_1)=2$ and arguing as in \caseref{case:dim2}{cond:dim2_angle1}, we obtain $f_{01}\neq0$ and
\[
f_{01}\bigl((a_0+b_0)Z_0 + (a_1+b_1)Z_1\bigr)=0,
\]
which forces $a_0=-b_0$, $a_1=-b_1$. Thus $Z_2=-Z_3$, contradicting the independence of $Z_2,Z_3$. Therefore $\sr(U_1)\neq2$.

Since $\sr(U_1)\le 3$ and neither $1$ nor $2$ can occur, we must have $\sr(U_1)=3$.

\hypertarget{pr:dim2_angle3}{}
\medskip
\noindent\textbf{\caseref{case:dim2}{cond:dim2_angle3} } In this case we have \eqref{eq:sr2_angle3}. Without loss of generality, assume that $F_0 = S_0$ and $F_1 = S_3$. $P,Q$ from \eqref{eq:U=SP+SQ} become
\begin{equation}\label{eq:PQ_decomp_1iii}
\begin{aligned} 
P &= S_0 \otimes Z_0 + S_3 \otimes Z_1,\\
Q &= S_0 \otimes (a_0Z_0+a_1Z_1) + S_3 \otimes (b_0Z_0+b_1Z_1).
\end{aligned}
\end{equation}

We first show $\sr(U_2)\le 3$ by considering three cases for $a_i,b_i$:

\begin{enumerate}[label=(\alph*)]
\item $a_0 \neq 0$:
\begin{equation}
\label{eq:U_case3_1}
\begin{aligned}
U_2 = &\left(\frac{S_0}{a_0}+S_3\right)\otimes S_0\otimes (a_0 Z_0 + a_1 Z_1) \\
    &+ S_0\otimes \left(S_3-\frac{a_1}{a_0}S_0\right)\otimes Z_1 \\
    &+ S_3\otimes S_3\otimes (b_0 Z_0 + b_1 Z_1).
\end{aligned}
\end{equation}

\item $a_0 = 0$ but $b_1 \neq 0$:
\begin{equation}
\label{eq:U_case3_2}
\begin{aligned}
U_2 = &S_0\otimes (S_0-\tfrac{b_0}{b_1}S_3)\otimes Z_0 + S_3\otimes S_0\otimes (a_0 Z_0 + a_1 Z_1) \\
    &+ (\frac{S_0}{b_1}+S_3)\otimes S_3 \otimes (b_0 Z_0 + b_1 Z_1).
\end{aligned}
\end{equation}

\item $a_0 = 0$ and $b_1 = 0$:
\begin{equation}
\label{eq:U_case3_3}
\begin{aligned}
U_2 = &\bigl(S_0\otimes S_0 + a_1 S_3\otimes S_0\bigr)\otimes (Z_0+Z_1) \\
    &+ S_0\otimes (S_3-S_0)\otimes Z_1 \\
    &+ S_3\otimes (b_0 S_3 - a_1 S_0)\otimes Z_0.
\end{aligned}
\end{equation}
\end{enumerate}

Every possible case gives three product terms, hence $\sr(U_2) \leq 3$.

To exclude the case when $\sr(U_2)=2$, we now establish the following equivalence:

\begin{equation}\label{eq:dim2_case3_claim}
\sr(U_2)=2 \;\Longleftrightarrow\; 
\text{$\begin{pmatrix} a_0 & b_0 \\ a_1 & b_1 \end{pmatrix}$ is diagonalizable}.
\end{equation}

We first Assume $\sr(U_2)=2$. By \Cref{lem:pr}, there exist linearly independent product operators $B_0\otimes C_0$ and $B_1\otimes C_1$ such that
\begin{equation}
\label{eq:PQBC_1iii}
P = \sum_{i=0}^{1} p_i\, B_i\otimes C_i,\quad
Q = \sum_{i=0}^{1} q_i\, B_i\otimes C_i.
\end{equation}

Because $\sr(P)=2$, both $p_0$ and $p_1$ are nonzero. Moreover, by \Cref{lem:srlb}, $\Span\{C_0, C_1\} = \Span\{Z_0, Z_1\}$; hence $C_0,C_1$ are linearly independent. Hence, $\{C_0,C_1\}$\allowbreak, $\{Z_0,Z_1\}$, $\{Z_2,Z_3\}$ are the bases of the same space. Recall that $Z_2 = \sum_{i=0}^{1} a_i Z_i$ and $Z_3 = \sum_{i=0}^{1} b_i Z_i$.

Write each $B_i$ as $B_i = \begin{pmatrix} b_{i0} & b_{i1} \\ b_{i2} & b_{i3} \end{pmatrix}$. Since the off-diagonal blocks of $P$ vanish, from \eqref{eq:PQBC_1iii}:
\[
\sum_{i=0}^{1} p_i b_{i1} C_i = 0,\qquad \sum_{i=0}^{1} p_i b_{i2} C_i = 0.
\]
Independence of $C_0,C_1$ together with $p_i\neq0$ forces $b_{i1}=b_{i2}=0$ for each $i$; hence each $B_i$ is diagonal: $B_i=\diag(b_{i0},b_{i3})$.

From \eqref{eq:PQ_decomp_1iii} and \eqref{eq:PQBC_1iii} we obtain
\begin{equation}\label{eq:Z0Z1Z2Z3}
\begin{split}
Z_0 &= \sum_{i=0}^{1} p_i b_{i0} C_i, \quad
Z_1 = \sum_{i=0}^{1} p_i b_{i3} C_i, \\
Z_2 &= \sum_{i=0}^{1} q_i b_{i0} C_i, \quad
Z_3 = \sum_{i=0}^{1} q_i b_{i3} C_i.
\end{split}
\end{equation}

Since $\{Z_2, Z_3\}$ and $\{Z_0, Z_1\}$ are both bases of the same two-dimensional space, consider the change-of-basis matrix representing the identity map with domain basis $\{Z_2,Z_3\}$ and codomain basis $\{Z_0,Z_1\}$:
\begin{equation}\label{eq:Z2Z3,Z0Z1}
[I_2]_{Z_2, Z_3}^{Z_0, Z_1} = 
\begin{pmatrix}
a_0 & b_0 \\
a_1 & b_1
\end{pmatrix}.
\end{equation}

Let $P' = \diag(p_0,p_1)$, $Q' = \diag(q_0,q_1)$, and $B = \begin{pmatrix} b_{00} & b_{03} \\ b_{10} & b_{13} \end{pmatrix}$.
\[
[I_2]_{Z_0,Z_1}^{C_0,C_1} = P'B,\qquad [I_2]_{Z_2,Z_3}^{C_0,C_1} = Q'B.
\]
Since $p_i,q_i\neq0$, $P'$ and $Q'$ are invertible, hence $B$ is invertible. Therefore,
\[
[I_2]_{Z_2,Z_3}^{Z_0,Z_1} = [I_2]_{C_0,C_1}^{Z_0,Z_1}\; [I_2]_{Z_2,Z_3}^{C_0,C_1} = B^{-1} P'^{-1} Q' B,
\]
which is diagonalizable as $P'^{-1}Q'$ is diagonal.

For the converse direction of \eqref{eq:dim2_case3_claim}, we assume that \eqref{eq:Z2Z3,Z0Z1} is diagonalizable. Then there exists an invertible $B$ and nonzero $\lambda_0,\lambda_1 \in \mathbb{C}$ such that $[I_2]_{Z_2,Z_3}^{Z_0,Z_1}$ = $B^{-1} \diag(\lambda_0,\lambda_1) B$.
We now construct $p_i,q_i\in \mathbb{C}$ and  $B_i,C_i\in \mathbb{M}_2$ for $i=0,1$ satisfying \eqref{eq:PQBC_1iii}.
Choose nonzero $p_0,p_1$ and set $q_i = \lambda_i p_i$ for $i=0,1$. Define $C_0,C_1$ by
\[
C_0 = \sum_{i=0}^{1} (B^{-1}P'^{-1})_{i0} Z_i,\quad 
C_1 = \sum_{i=0}^{1} (B^{-1}P'^{-1})_{i1} Z_i,
\]
with $P' = \diag(p_0,p_1)$. Furthermore, we write $B = \begin{pmatrix} b_{00} & b_{03} \\ b_{10} & b_{13} \end{pmatrix}$ and define $B_i = \diag(b_{i0}, b_{i3})$ for $i=0,1$.

Let $Q' = \diag(q_0,q_1)$. Then one verifies that $[I_2]_{Z_0,Z_1}^{C_0,C_1} = P'B$ and $[I_2]_{Z_2,Z_3}^{C_0,C_1} = Q'B$.
Hence \eqref{eq:Z0Z1Z2Z3} holds; consequently \eqref{eq:PQBC_1iii} holds, yielding
\[
U_2 = S_0\otimes P + S_3\otimes Q = \sum_{i=0}^{1} (p_i S_0 + q_i S_3)\otimes B_i\otimes C_i,
\]
a sum of two product terms. Thus $\sr(U_2)\le 2$. Since $\sr(U_2)\neq1$, we obtain $\sr(U_2)=2$, which proves \eqref{eq:dim2_case3_claim}. Consequently, $\sr(U_2)=3$ if and only if \eqref{eq:Z2Z3,Z0Z1} is not diagonalizable.

We now exploit the specific conditions $Z_0=I_2$ and $Z_1=\diag(1,e^{i\varphi})$ with $\varphi\notin2\pi\mathbb{Z}$ in \eqref{eq:U2_simplified} to characterize when \eqref{eq:Z2Z3,Z0Z1} fails to be diagonalizable. Since $Z_2,Z_3\in\Span\{I_2,Z_1\}$, they must be diagonal; we can write $Z_2=\diag(z_{20},z_{23})$, and $Z_3=\diag(z_{30},z_{33})$, with $|z_{20}|=|z_{23}|=|z_{30}|=|z_{33}|=1$ forced by unitarity of $Z_2,Z_3$. Solving $Z_2=a_0I_2+a_1Z_1$ and $Z_3=b_0I_2+b_1Z_1$ yields
\[
a_0=\frac{z_{23}-z_{20}e^{i\varphi}}{1-e^{i\varphi}},\quad 
a_1=\frac{z_{20}-z_{23}}{1-e^{i\varphi}},\quad 
b_0=\frac{z_{33}-z_{30}e^{i\varphi}}{1-e^{i\varphi}},\quad 
b_1=\frac{z_{30}-z_{33}}{1-e^{i\varphi}}.
\]

A $2\times2$ matrix is not diagonalizable if and only if it has a repeated eigenvalue but is not a scalar matrix, i.e., the discriminant $\Delta=(a_0-b_1)^2+4a_1b_0$ of its characteristic polynomial vanishes while the matrix is not a scalar multiple of the identity. Substituting the expressions above yields $\Delta=0$ precisely when
\begin{equation}
\label{eq:non-diag-condition}
\bigl(z_{23}-z_{20}e^{i\varphi}-z_{30}+z_{33}\bigr)^2+4(z_{20}-z_{23})(z_{33}-z_{30}e^{i\varphi})=0.
\end{equation}
Moreover, \eqref{eq:Z2Z3,Z0Z1} is a scalar multiple of the identity if and only if $a_1=0$, $b_0=0$, and $a_0=b_1$, which is equivalent to
\[
z_{20}=z_{23},\quad z_{33}=z_{30}e^{i\varphi},\quad z_{20}=z_{30}.
\]
Therefore, \eqref{eq:Z2Z3,Z0Z1} is not diagonalizable if and only if condition \eqref{eq:non-diag-condition} holds and $z_{20},z_{23},\allowbreak z_{30},z_{33}$ do not satisfy all three equalities above simultaneously. In this situation we have $\sr(U_2)=3$.

\hypertarget{pr:dim3_angle1}{}
\noindent\textbf{\caseref{case:dim3}{cond:dim3_angle1} } 
First, using condition \ref{cond:sr2_angle1} in \Cref{lem:angle2}, we have $F_1 = (-1)^n I_2 - F_0$. $P,Q$ from \eqref{eq:U=SP+SQ} become
\begin{equation}\label{eq:PQ_decomp_2i}
\begin{aligned}
P &= S_0 \otimes Z_0 + S_3 \otimes Z_1,\\
Q &= F_0 \otimes Z_2 + F_1 \otimes Z_3 = F_0 \otimes Z_2 + [(-1)^n I_{2} - F_0] \otimes \sum_{i=0}^{2} t_i Z_i.
\end{aligned}
\end{equation}
By \Cref{lem:srlb} we know $\sr(U_2)\geq\dim\Span\{Z_0,Z_1,Z_2,Z_3\}=3$.

We write $Z_3 = t_0 Z_0 + t_1 Z_1 + t_2 Z_2$. We first prove the following equivalence:
\begin{equation}\label{cond:t0t1t2}
    \sr(U_2)=3 \Longleftrightarrow t_0 + t_1 \neq 0\text{ and }t_2 \neq 1.
\end{equation}

\paragraph{($\Leftarrow$)}
We assume $t_0 + t_1 \neq 0$ and $t_2 \neq 1$. We show $\sr(U_2)=3$ by proving $\sr(U_2)\leq3$ in two cases.

\begin{enumerate}[label=(\alph*)]
\item $t_1\neq0$:
\begin{equation}
\begin{aligned}
U_2&=\bigl(\frac{S_0}{t_0+t_1}+\frac{(-1)^{n+1}S_3}{t_2-1}\bigr) \otimes I_2 \otimes (t_0Z_0+t_1Z_1)\\
&\quad+\frac{S_0}{t_1} \otimes (t_1S_0-t_0S_3) \otimes (Z_0-\frac{t_0Z_0+t_1Z_1}{t_0+t_1})\\
&\quad+S_3 \otimes \bigl(\frac{(-1)^nI_2}{t_2-1}+F_1\bigr) \otimes \bigl(t_0Z_0+t_1Z_1+(t_2-1)Z_2\bigr)
\end{aligned}
\end{equation}

\item $t_1 = 0$ (note that $t_0 \neq 0$, otherwise $\sr(Q) = 1$):
\begin{equation}
\begin{aligned}
U_2&=(S_0-\frac{t_0(-1)^n}{t_2-1}S_3) \otimes I_2 \otimes Z_0 \\
&\quad+ S_0 \otimes S_3 \otimes (Z_1-Z_0) \\
&\quad+ S_3 \otimes ((-1)^n+(t_2-1)F_1)\otimes(Z_2+\frac{t_0Z_0}{t_2-1})
\end{aligned}
\end{equation}
\end{enumerate}
Every possible case gives three product terms, hence $\sr(U_2) \leq 3$.

\paragraph{($\Rightarrow$)}
Assume $\sr(U_2) = 3$. Then there exist three linearly independent product operators $\{B_i \otimes C_i\}_{0 \leq i \leq 2}$ such that
\[
P = \sum_{i=0}^2 p_i\, B_i\otimes C_i,\qquad Q = \sum_{i=0}^2 q_i\, B_i\otimes C_i.
\]
By \Cref{lem:srlb}, $\Span\{C_i\}_{0\leq i\leq2} = \Span\{Z_i\}_{0\leq i\leq2}$. Since $\{Z_i\}_{0\leq i\leq2}$ is independent, both sets are bases. Thus we have $C_i = \sum_{j=0}^2 c_{ij} Z_j$, with each $c_{ij} \in \mathbb{C}$. We define the change-of-basis matrix:
\begin{equation}\label{eq:C}
    C:= [c_{ij}]_{0\leq i,j \leq 2}.
\end{equation}

Substituting $C_i = \sum_{j=0}^2 c_{ij} Z_j$ into $P$ and $Q$ gives
\[
P = \sum_{i,j=0}^{2} p_i c_{ij} B_i\otimes Z_j,\qquad 
Q = \sum_{i,j=0}^{2} q_i c_{ij} B_i\otimes Z_j.
\]
Equating the above expansions with the original definitions of $P$ and $Q$ in \eqref{eq:PQ_decomp_2i}, and using the linear independence of $\{Z_j\}$, we obtain:
\begin{equation}
\begin{aligned}
S_0 &= \sum_{i=0}^2 p_i c_{i0} B_i, &
S_3 &= \sum_{i=0}^2 p_i c_{i1} B_i, &
0 &= \sum_{i=0}^2 p_i c_{i2} B_i,\\
t_0 F_1 &= \sum_{i=0}^2 q_i c_{i0} B_i, &
t_1 F_1 &= \sum_{i=0}^2 q_i c_{i1} B_i, &
F_0 + t_2 F_1 &= \sum_{i=0}^2 q_i c_{i2} B_i.
\end{aligned}
\end{equation}
From the first three equations, we have $(p_0B_0,\; p_1B_1,\; p_2B_2)C = (S_0,\; S_3,\; 0).$ Multiplying on the right by $C^{-1}$ gives, for $i=0,1,2$,
\begin{equation}\label{eq:dim3_1i_first3}
p_i B_i = (C^{-1})_{0i} S_0 + (C^{-1})_{1i} S_3. 
\end{equation}
By the last three equations, $(q_0B_0,\; q_1B_1,\; q_2B_2) = (t_0F_1,\; t_1F_1,\; F_0 + t_2F_1)C^{-1}$.
Hence, we have for $i=0,1,2$,
\begin{equation}\label{eq:dim3_1i_last3}
q_i B_i = (C^{-1})_{2i} F_0 + \bigl( t_0 (C^{-1})_{0i} + t_1 (C^{-1})_{1i} + t_2 (C^{-1})_{2i} \bigr) F_1.
\end{equation}
We assume for contradiction that $t_0+t_1=0$. For every $j\in\{0,1,2\}$, $p_j$ and $q_j$ are not both zero (otherwise $\sr(U_2)\le 2$). This allows us to consider the following three cases based on the zero/nonzero pattern of $(p_j,q_j)$. Under $t_0+t_1=0$, we will prove that \eqref{eq:0j=1j} holds for all $j=0,1,2$, contradicting the invertibility of $C$.
\begin{equation} \label{eq:0j=1j}
(C^{-1})_{0j} = (C^{-1})_{1j}.
\end{equation}

\begin{enumerate} [label=(\alph*)]
\item \(p_j \neq 0,\; q_j \neq 0\): \label{casea}
By \eqref{eq:dim3_1i_first3}, $B_j$ is diagonal. Therefore, $q_jB_j$ is diagonal. By \eqref{eq:dim3_1i_last3} and \Cref{cor:nonzerooffdiag}, we have
\begin{equation} \label{eq:sameCoeffqrBr}
(C^{-1})_{2j} = t_0 (C^{-1})_{0j} + t_1 (C^{-1})_{1j} + t_2 (C^{-1})_{2j}. \end{equation}
Consequently, by \eqref{eq:dim3_1i_last3},\eqref{eq:sameCoeffqrBr} and \eqref{eq:sr2_angle1}, we have
\begin{equation}\label{eq:qjBj}
q_j B_j = (C^{-1})_{2j} (F_0 + F_1) = (C^{-1})_{2j} (-1)^n I_2.
\end{equation}
Combining \eqref{eq:dim3_1i_first3} and \eqref{eq:qjBj} by eliminating $B_j$ we know
\[
\frac{q_j}{p_j} \bigl( (C^{-1})_{0j} S_0 + (C^{-1})_{1j} S_3 \bigr) = (C^{-1})_{2j} (-1)^n I_2.
\]
Linear independence of $S_0,S_3$ yields \eqref{eq:0j=1j}.

Moreover, $(C^{-1})_{2j} \neq 0$; otherwise $q_j B_j = 0$ by \eqref{eq:qjBj} would force $B_j=0$, contradicting independence of $\{B_i \otimes C_i\}_{0\leq i \leq 2}$. Similarly, \eqref{eq:dim3_1i_first3} and \eqref{eq:0j=1j} forces $(C^{-1})_{0j}$ and $(C^{-1})_{1j}$ to be nonzero.

From \eqref{eq:sameCoeffqrBr} and \eqref{eq:0j=1j}, we have
\begin{equation} \label{eq:2j0j}
(C^{-1})_{2j} (1 - t_2) = (C^{-1})_{0j} (t_0 + t_1).
\end{equation}

\item \(p_j \neq 0,\; q_j = 0\): \label{caseb}
By \eqref{eq:dim3_1i_last3} and independence of $F_0,F_1$, we have
$t_0 (C^{-1})_{0j} + t_1 (C^{-1})_{1j} = 0.$ Since $t_0+t_1=0$, we obtain \eqref{eq:0j=1j}.

\item \(p_j = 0,\; q_j \neq 0\): \label{casec}
By \eqref{eq:dim3_1i_first3} and independence of $S_0,S_3$, we obtain $(C^{-1})_{0j} = (C^{-1})_{1j} = 0$. Hence, \eqref{eq:0j=1j} holds.

\end{enumerate}

Hence for each column $j$ of $C^{-1}$, we have $(C^{-1})_{0j} = (C^{-1})_{1j}$. Thus the first two rows of $C^{-1}$ coincide, contradicting the invertibility of $C$. Hence $t_0 + t_1 \neq 0$.

Now we show $t_2 \neq 1$. Since neither $P$ nor $Q$ is a product operator, at least two of $\{p_i\}_{0 \leq i \leq 2}$ are nonzero and at least two of the $\{q_i\}_{0 \leq i \leq 2}$ are nonzero. Hence, by the pigeonhole principle, there exists an index $r$ such that $p_r \neq 0$ and $q_r \neq 0$, falling into Case \ref{casea}.
Since $(C^{-1})_{2r}$,$(C^{-1})_{0r}$, and $t_0+t_1$ are all nonzero, \eqref{eq:2j0j} gives $t_2 \neq 1$. This completes the proof of \eqref{cond:t0t1t2}.

We now exploit the specific conditions $Z_0=I_2$ and $Z_1=\diag(1,e^{i\varphi})$ with $\varphi\notin2\pi\mathbb{Z}$ in \eqref{eq:U2_simplified} to characterize when $t_0+t_1\neq0$ and $t_2\neq1$.

First, $Z_2$ is not diagonal. Indeed, if $Z_2$ were diagonal, then $Z_3=t_0Z_0+t_1Z_1+t_2Z_2$ would also be diagonal, implying $\{Z_2,Z_3\}\subseteq\Span\{Z_0,Z_1\}$, which contradicts $\dim\Span\{Z_0,Z_1,Z_2,Z_3\}=3$. Hence $Z_2$ is non-diagonal; in particular, $z_{21}\neq0$, where we write each $Z_k=\begin{pmatrix}z_{k0}&z_{k1}\\ z_{k2}&z_{k3}\end{pmatrix}$.

From $Z_3=t_0Z_0+t_1Z_1+t_2Z_2$ we obtain the entrywise relations
\begin{equation} \label{eq:entry_2i}
z_{31}=t_2z_{21},\qquad z_{32}=t_2z_{22},\qquad z_{30}=t_0+t_1+t_2z_{20}
\end{equation}
Since $z_{21}\neq0$, the first equation in \eqref{eq:entry_2i} implies
\[
t_2\neq1\;\Longleftrightarrow\; z_{31}\neq z_{21}.
\]
Moreover, $t_0+t_1=z_{30}-t_2z_{20}$ from the third equation in \eqref{eq:entry_2i}; hence
\[
t_0+t_1\neq0\;\Longleftrightarrow\; z_{30}\neq t_2z_{20}\;\Longleftrightarrow\; z_{30}z_{21}\neq t_2z_{21}z_{20}=z_{31}z_{20},
\]

Therefore, $t_2\neq1$ and $t_0+t_1\neq0$ are equivalent to the two conditions
\begin{equation}\label{eq:entry_cond_2i}
z_{31}\neq z_{21},\qquad z_{30}z_{21}\neq z_{31}z_{20}.
\end{equation}
Consequently, $\sr(U_2)=3$ if and only if \eqref{eq:entry_cond_2i} holds.

\hypertarget{pr:dim3_angle2}{}
\noindent\textbf{\caseref{case:dim3}{cond:dim3_angle2} } 
By \Cref{lem:srlb} we have $\sr(U_2)\geq\dim\Span\{Z_0,Z_1,Z_2,Z_3\}=3$. We first prove that $\sr(U_2)=3$ if and only if $t_2\neq-1$ and $t_0-t_1\neq0$.
\paragraph{($\Leftarrow$)}
Assume $t_0 - t_1 \neq 0$ and $t_2 \neq -1$. We will show that $\sr(U_2) = 3$. It remains to show $\sr(U_2)\leq3$. Using condition \ref{cond:sr2_angle2} in \Cref{lem:angle2}, we have $F_0 = (-1)^m (S_0-S_3) + F_1$. We now decompose $U_2$ according to two cases for $t_0$.

\begin{enumerate}[label=(\alph*)]
\item $t_0\neq0$:
\begin{equation}
\begin{aligned}
U_2&=\bigl(\frac{S_0}{t_1-t_0}+\frac{(-1)^m}{t_2+1}S_3\bigr) \otimes (S_3-S_0) \otimes (t_0Z_0+t_1Z_1)\\
&\quad+t_0S_0 \otimes \bigl(\frac{S_0}{t_0}-\frac{S_3-S_0}{t_1-t_0}\bigr) \otimes (Z_0+Z_1)\\
&\quad+S_3 \otimes (F_1+\frac{(-1)^{m+1}}{t_2+1}(S_3-S_0)) \otimes \bigl(t_0Z_0+t_1Z_1+(t_2+1)Z_2\bigr).
\end{aligned}
\end{equation}

\item $t_0 = 0$ (note that $t_1 \neq 0$, otherwise $\sr(Q) = 1$):
\begin{equation}
\begin{aligned}
U_2&=\bigl(\frac{S_0}{t_1}+\frac{(-1)^m}{t_2+1}S_3\bigr) \otimes (S_3-S_0) \otimes t_1Z_1\\
&\quad+S_0 \otimes S_0 \otimes (Z_0+Z_1)\\
&\quad+S_3 \otimes (F_1+\frac{(-1)^{m+1}}{t_2+1}(S_3-S_0)) \otimes (t_1Z_1+(t_2+1)Z_2).
\end{aligned}
\end{equation}
\end{enumerate}
Every possible case gives three product terms, hence $\sr(U_2) \leq 3$.

\paragraph{($\Rightarrow$)}
We assume for contradiction that $t_0 - t_1 = 0$ or $t_2 = -1$. By an argument similar to \caseref{case:dim3}{cond:dim3_angle1}, the first row of $C^{-1}$ (defined in \eqref{eq:C}) equals $-1$ times the second row, contradicting the invertibility of $C^{-1}$. Hence $t_0 - t_1 \neq 0$ and $t_2 \neq -1$.

Combining both directions, $\sr(U_2)=3$ if and only if $t_0-t_1\neq0$ and $t_2\neq-1$. We will use a similar argument as in \caseref{case:dim3}{cond:dim3_angle1} to derive the equivalent conditions for $t_2\neq-1$ and $t_0-t_1\neq0$. Recall that $Z_0=I_2$ and $Z_1=\diag(1,e^{i\varphi})$ with $\varphi\notin2\pi\mathbb{Z}$ in \eqref{eq:U2_simplified}. Since $Z_2$ is not diagonal (otherwise $\dim\Span\{Z_0,Z_1,Z_2,Z_3\}<3$), we have $z_{21}\neq0$. From $Z_3=t_0Z_0+t_1Z_1+t_2Z_2$ we obtain the entrywise relations
\[
z_{31}=t_2z_{21},\qquad z_{30}=t_0+t_1+t_2z_{20},\qquad z_{33}=t_0+t_1e^{i\varphi}+t_2z_{23}.
\]

By the first relation, $t_2\neq-1$ is equivalent to $z_{31}\neq -z_{21}$. Moreover, we have
\[
t_0-t_1 = \frac{(1+e^{i\varphi})(z_{30}z_{21}-z_{31}z_{20})-2(z_{33}z_{21}-z_{31}z_{23})}{z_{21}(e^{i\varphi}-1)}.
\]
Therefore, $t_0-t_1\neq0$ if and only if
\[
(1+e^{i\varphi})(z_{30}z_{21}-z_{31}z_{20}) \neq 2(z_{33}z_{21}-z_{31}z_{23}).
\]

Thus, $t_2\neq-1$ and $t_0-t_1\neq0$ are equivalent to
\begin{equation}\label{eq:entry_cond_2ii}
z_{31} \neq -z_{21},\qquad (1+e^{i\varphi})(z_{30}z_{21}-z_{31}z_{20}) \neq 2(z_{33}z_{21}-z_{31}z_{23}).
\end{equation}
As a result, we conclude that $\sr(U_2)=3$ if and only if \eqref{eq:entry_cond_2ii} holds.

\hypertarget{pr:dim3_angle3}{}
\noindent\textbf{\caseref{case:dim3}{cond:dim3_angle3} } 
By \Cref{lem:srlb} we know $\sr(U_2)\geq\dim\Span\{Z_0,Z_1,Z_2,Z_3\}=3$. It remains to show $\sr(U_2)\leq3$. Using condition \ref{cond:sr2_angle3} in \Cref{lem:angle2}, we may assume without loss of generality that $F_0 = S_0$, $F_1=S_3$. We now decompose $U_2$ according to three cases for the coefficients $t_i$.

\begin{enumerate}[label=(\alph*)]
\item $t_2\neq0$:
\begin{equation}
\begin{aligned}
U_2 =&\; S_3 \otimes (S_0 + t_2 S_3) \otimes \left(Z_2 + \frac{t_0}{t_2} Z_0\right) \\
   &+ \left(S_0 - \frac{t_0}{t_2} S_3\right) \otimes S_0 \otimes Z_0 \\
   &+ \left(S_0 + t_1 S_3\right) \otimes S_3 \otimes Z_1.
\end{aligned}
\end{equation}

\item $t_2 = 0, t_1\neq0$:
\begin{equation}
\begin{aligned}
U_2 =&\; S_0 \otimes \left(S_0 - \frac{t_0}{t_1} S_3\right) \otimes Z_0 \\
   &+ S_3 \otimes S_0 \otimes Z_2 \\
   &+ \left(\frac{1}{t_1} S_0 + S_3\right) \otimes S_3 \otimes (t_0 Z_0 + t_1 Z_1).
\end{aligned}
\end{equation}

\item $t_2 = 0, t_1=0$:
\begin{equation}
\begin{aligned}
U_2 =&\; (S_0 + t_0 S_3) \otimes (S_0 + S_3) \otimes Z_0 \\
   &+ S_0 \otimes S_3 \otimes (Z_1 - Z_0) \\
   &+ S_3 \otimes S_0 \otimes (Z_2 - t_0 Z_0).
\end{aligned}
\end{equation}
\end{enumerate}

Every possible case gives three product terms, hence $\sr(U_2) \leq 3$. Hence, $\sr(U_2)=3$.

\end{proof}
\end{appendices}

\bibliographystyle{ieeetr}
\bibliography{ref}

\end{document}